\newtheorem{definition}{Definition}[section]
\newtheorem{lemma}[definition]{Lemma}
\newtheorem{proposition}[definition]{Proposition}
\newtheorem{theorem}[definition]{Theorem}
\definecolor{detailsgray}{gray}{0.3}
\def \cbb{\mathbb{C}}
\def \rbb{\mathbb{R}}
\def \dcal {\mathcal{D}}
\def \hcal {\mathcal{H}}
\def \scal {\mathcal{S}}
\newcommand{\betav}{\boldsymbol{\beta}}
\newcommand{\thetav}{\boldsymbol{\theta}}
\newcommand{\etav}{\boldsymbol{\eta}}
\newcommand{\piv}{\boldsymbol{\pi}}
\newcommand{\varphiv}{\boldsymbol{\varphi}}
\newcommand{\zerov}{\boldsymbol{0}}
\def \. { \,\! }
\def\clap#1{\hbox to 0pt{\hss#1\hss}}
\def \cdotarg { \, \cdot \, }
\DeclareMathOperator{\im}{Im}
\DeclareMathOperator{\re}{Re}
\newcommand{\idop}{\boldsymbol{1}}
\def \noqed{ \renewcommand{\qed}{} }
\newcommand{\hscalar}[2]{\langle #1 , #2 \rangle }
\newcommand{\zd}{z^\dagger}
\newcommand{\Fmin}{F_{\mathrm{min}}}
\newcommand{\opt}{T_\mathrm{one}^{00}(g^2)}
\numberwithin{equation}{section}
\title{Negative energy densities in integrable quantum field theories at one-particle level}
\author{Henning Bostelmann\thanks{University of York, Department of Mathematics, York YO10 5DD, United Kingdom;
 e-mail: \mbox{\tt henning.bostelmann@york.ac.uk}} 
\and
Daniela Cadamuro\thanks{University of Bristol, School of Mathematics, University Walk, Bristol BS8 1TW,
United Kingdom.
Present address: Mathematisches Institut, Universit\"at G\"ottingen, Bunsenstra\ss{}e 3-5, 37073 G\"ottingen, Germany;
e-mail: \mbox{\tt daniela.cadamuro@mathematik.uni-goettingen.de} }
}
\date{2 March 2016}
\begin{document}

\maketitle

\begin{abstract}
We study the phenomenon of negative energy densities in quantum field theories with self-interaction. Specifically, we consider a class of integrable models (including the sinh-Gordon model) in which we investigate the expectation value of the energy density in one-particle states. In this situation, we classify the possible form of the stress-energy tensor from first principles. We show that one-particle states with negative energy density generically exist in non-free situations, and we establish lower bounds for the energy density (quantum energy inequalities). Demanding that these inequalities hold reduces the ambiguity in the stress-energy tensor, in some situations fixing it uniquely.
Numerical results for the lowest spectral value of the energy density allow us to demonstrate how negative energy densities depend on the coupling constant and on other model parameters.
\end{abstract}

\section{Introduction} \label{sec:intro}

The energy density is one of the fundamental observables in classical as well as quantum field theories. It has a special significance in field theories on curved backgrounds, since it enters Einstein's field equation as a source term, and is therefore linked to the structure of space-time. But also on flat Minkowski space, as well as in low-dimensional conformal field theories, it plays an important role.

In the transition from classical to quantum field theories, some of the distinctive properties of the energy density are lost. In particular, the classical energy density is positive at every point, which in General Relativity implies certain stability results, such as the absence of wormholes \cite{FordRoman:1996}. This is however not the case in quantum field theory, not even on flat spacetime: While the global energy operator $H$ is still nonnegative, the energy density can have arbitrarily negative expectation values \cite{EGJ}. However, a remnant of positivity is still expected to hold. When one considers local averages of the energy density, $T^{00}(g^2)=\int dt \, g^2(t)\, T^{00}(t,0)$ for some fixed smooth real-valued function $g$, then certain lower bounds -- quantum energy inequalities (QEIs) -- should be satisfied. In the simplest case, one finds for any given averaging function $g$ a constant $c_g > 0$ such that
\begin{equation}\label{eq:qei0}
  \hscalar{\varphi}{T^{00}(g^2) \varphi} \geq - c_g \lVert \varphi \rVert^2
\end{equation}
for all (suitably regular) vector states $\varphi$ of the system; this is a so-called \emph{state-independent QEI}. In general, \eqref{eq:qei0} may need to be replaced with a somewhat weaker version (a state-\emph{dependent} QEI) where the right-hand side can include a slight dependence on the total energy of the state $\varphi$.

This raises the question under which conditions the QEI \eqref{eq:qei0} can be shown to hold rigorously. The inequality has in fact been established for various \emph{linear} quantum fields, on flat as well as curved spacetime,
 and in conformal QFTs 
(e.g., \cite{Flanagan:1997,PfenningFord_static:1998,Fews00,FewsterVerch_Dirac,FewsterHollands:2005}; see
 \cite{Fewster:lecturenotes} for a review). 
However, dropping the restriction to linear fields, that is, allowing for \emph{self-interacting} quantum field theories, few results are available. This is not least due to the limited availability of rigorously constructed quantum field theoretical models; see however our recent proof of \eqref{eq:qei0} in the massive Ising model \cite{BostelmannCadamuroFewster:ising}. In a model-independent setting, one can establish state-dependent inequalities for certain ``classically positive'' expressions \cite{BostelmannFewster:2009}, based on a model-independent version of the operator product expansion \cite{Bos:product_expansions}, but the relation of these expressions to the energy density remains unknown.

In the present paper, we will investigate the inequality \eqref{eq:qei0} in a specific class of self-interacting models on 1+1 dimensional Minkowski space, so-called \emph{quantum integrable models}, which have recently become amenable to a rigorous construction. Specifically, we consider integrable models with one species of massive scalar boson and without bound states.

An a priori question is what form the stress-energy tensor $T^{\alpha\beta}$ takes in these models. There is a straightforward answer in models derived from a classical Lagrangian, such as the sinh-Gordon model, where a candidate for the operator can be computed \cite{FringMussardoSimonetti:1993,KoubekMussardo:1993,MussardoSimonetti:1994}. However, we also consider theories where no associated Lagrangian is known; and more generally, we aim at an intrinsic characterization of the quantum theory without referring to a ``quantization process''. In fact, there will usually be more than one local field that is compatible with generic requirements on $T^{\alpha\beta}$, such as covariance, the continuity equation, and its relation to the global Hamiltonian. 

Given $T^{\alpha\beta}$, one can ask whether the QEI \eqref{eq:qei0} holds for this stress-energy tensor, or rather \emph{for which choice} of stress-energy tensor. In fact, QEIs may hold for some choices of $T^{\alpha\beta}$ but not for others, as the nonminimally coupled free field on Minkowski space shows \cite{FewsterOsterbrink2008}. Ideally, one would hope that requiring a QEI fixes the stress-energy tensor uniquely.

In the present article, we will consider the above questions in integrable models, but with one important restriction: We will consider the energy density \emph{at one-particle level} only. That is, we will ask whether the inequality \eqref{eq:qei0} holds for all (sufficiently regular) \emph{one-particle} states $\varphi$. 

This case might seem uninteresting at first: One might argue that in integrable models, where the particle number is a conserved quantity, the effect of interaction between particles is absent at one-particle level. But this is \emph{not} the case, as already the massive Ising model shows \cite{BostelmannCadamuroFewster:ising}: in this model of scalar bosons, one-particle states with negative energy density exist, whereas in a model of free bosons, the energy density is positive at one-particle level. We will demonstrate in this paper that self-interaction in our class of models leads to negative energy density in one-particle states, and that this effect increases with the strength of the interaction.

Our approach is as follows. Having recalled the necessary details of the integrable models considered (Sec.~\ref{sec:integrable}), we ask what form the stress-energy tensor can take at one-particle level. This will lead to a full characterization of the integral operators involved, with the energy density fixed up to a certain polynomial expression (Sec.~\ref{sec:edensity}). 

In Sec.~\ref{sec:negative}, we will show that under very generic assumptions, states of negative energy density exist, and that for certain choices of the stress-energy tensor, the energy density becomes so negative that the QEI \eqref{eq:qei0} \emph{cannot} hold even in one-particle states. For other choices of the energy density operator, we demonstrate in Sec.~\ref{sec:qei} that the QEI \emph{does} hold. In consequence (Sec.~\ref{sec:models}), we find in the massive Ising model that one-particle QEIs hold for \emph{exactly one} choice of energy density, whereas in other models (including the sinh-Gordon model), the choice of energy density is at least very much restricted by a QEI. 

All this is based on rigorous estimates for the expectation values of $T^{00}(g^2)$. However, the best possible constant $c_g$ in \eqref{eq:qei0} -- in other words, the lowest spectral value of $T^{00}(g^2)$ restricted to one-particle matrix elements -- can only be obtained by numerical approximation. We discuss the results of an approximation scheme in Sec.~\ref{sec:numerical}, thus demonstrating how the effect of negative energy density varies with the coupling constant and with the form of the scattering function. The program code used for this purpose is supplied with the article \cite{suppl:code}. We end with a brief outlook in Sec.~\ref{sec:conclusions}.

\section{Integrable models}\label{sec:integrable}

\begin{table}
 \begin{tabular}{l|l|l|l}
     model & $S(\zeta)$ & $\Fmin(\zeta)$ & parameters \\
     \hline
     free field & $\hphantom{-}1$ & $1$ \\[10pt]
     Ising  & $ -1$ & $-i\sinh\frac{\zeta}{2}$\\[10pt]
     sinh-Gordon & $\hphantom{-}\dfrac{\sinh \zeta - i \sin B \pi/2 }{\sinh \zeta + i \sin B \pi/2 }$ 
     & $\exp J_B(\zeta)$ & $0 < B < 2$ \\[15pt]
     \parbox{2.5cm}{(generalized sinh-Gordon)} & $\hphantom{-}\displaystyle\prod_{j =1}^n \dfrac{\sinh \zeta - i \sin B_j \pi/2 }{\sinh \zeta + i \sin B_j \pi/2 }$ 
     & $\dfrac{\exp \textstyle\sum_{j =1}^n J_{B_j}(\zeta)}{(-i \sinh \tfrac{\zeta}{2})^{2 \lfloor n/2 \rfloor}}$ &
     \multirow{2}{*}{
      \parbox{3.5cm}{\raggedright $B_j \in (0,2) + i \rbb$, either real or in complex-conjugate pairs}
     }
     \\[15pt]
     (generalized Ising) & $-\displaystyle\prod_{j =1}^n \dfrac{\sinh \zeta - i \sin B_j \pi/2 }{\sinh \zeta + i \sin B_j \pi/2 }$ 
     & $\dfrac{ \exp \textstyle\sum_{j =1}^n J_{B_j}(\zeta)}{(-i \sinh \tfrac{\zeta}{2})^{2 \lceil n/2 \rceil-1}}$ \\[10pt]
 \end{tabular}
 \caption{Examples of scattering functions $S$ and corresponding minimal solutions $\Fmin$. See Eq.~\eqref{eq:jdef} for the definition of the function $J_B$.}
 \label{tab:models}
\end{table}

For our investigation, we will use a specific class of quantum field theoretical models on 1+1 dimensional spacetime, a simple case of so-called \emph{integrable models} of quantum field theory. These models describe a single species of scalar massive Bosons with nontrivial scattering. The scattering matrix is \emph{factorizing}: When two particles with rapidities $\theta$ and $\eta$ scatter, they exchange a phase factor $S(\theta-\eta)$; and multi-particle scattering processes can be described by a sequence of two-particle scattering processes. 

There are several approaches to constructing such integrable quantum field theories. Conventionally, one starts from a classical Lagrangian, derives the two-particle scattering function $S$ from there, and then constructs local operators (quantum fields) by their matrix elements in asymptotic scattering states; this is the \emph{form factor programme} \cite{Smirnov:1992, BabujianFoersterKarowski:2006}. A more recent, alternative approach \cite{SchroerWiesbrock:2000-1, Lechner:2008} starts from the function $S$ as its input, then constructs quantum field localized in spacelike wedges (rather than at spacetime points), and uses these to abstractly obtain observables localized in bounded regions. 

We will largely follow the second mentioned approach here; in particular, we set out from a function $S$ rather than from a classical Lagrangian. In all what follows, we will assume that a scattering function $S$ is given, which we take to be a meromorphic function on $\cbb$ which fulfils the symmetry properties
\begin{equation}\label{eq:sprop}
   S(-\zeta) = S(\zeta)^{-1} = S(\zeta + i \pi) = \overline{S(\bar \zeta)}.
\end{equation}
A range of examples for such functions $S$ can easily be given, in particular because the properties \eqref{eq:sprop} are preserved under taking products of functions; see Table~\ref{tab:models}. This includes the sinh-Gordon model, depending on a coupling parameter $B$, which is normally constructed from a Lagrangian \cite{FringMussardoSimonetti:1993}; but for other examples (e.g., the generalized sinh-Gordon models mentioned in  Table~\ref{tab:models}), no corresponding Lagrangian is known.

We will not enter details of the construction of the associated quantum field theory based on $S$ here, but will recall only the general concepts as far as relevant to the present analysis. The single particle space of the theory is given by $\hcal_1=L^2(\rbb,d\theta)$, where the variable $\theta$ of the wave function is rapidity, linked to particle two-momentum $p$ by $p(\theta)=\mu(\cosh \theta,\sinh\theta)$; here $\mu>0$ is the particle mass. On $\hcal_1$, the usual representation of the Poincar\'e group acts. One then constructs an ``$S$-symmetric'' Fock space $\hcal$ over $\hcal_1$, on which ``interacting'' annihilation and creation operators $z(\theta)$ and $\zd(\theta)$ act; instead of the CCR, they fulfil the Zamolodchikov-Faddeev relations \cite{Lechner:2008}, depending on $S$. The quantum field theory is constructed on this Fock space. If $A$ is an operator of the theory (of a certain regularity class, including smeared Wightman fields),  localized in a bounded spacetime region, then it can be written in a series expansion \cite{BostelmannCadamuro:expansion,BostelmannCadamuro:characterization}
\begin{equation}\label{eq:expansion}
A = \sum_{m,n=0}^\infty \int \frac{d\thetav \, d\etav}{m!n!} F_{m+n}^{[A]}(\thetav+i\zerov,\etav+i\piv-i\zerov) z^{\dagger}(\theta_1)\cdots z^\dagger(\theta_m) z(\eta_1)\cdots z(\eta_n),
\end{equation}
where  $F_{m+n}$ are meromorphic functions with certain analyticity, symmetry and growth properties (which we will recall where we need them). Examples of such local observables $A$ would include smeared versions of the energy density, supposing they fall into the regularity class mentioned.

In the construction of functions $F_k$ that fulfil these properties, an important ingredient is the so-called \emph{minimal solution} $\Fmin$ of the model \cite{KarowskiWeisz:1978}. We consider it here with the following conventions.

\begin{definition}\label{def:fmin}
  Given a scattering function $S$, a \emph{minimal solution} is a meromorphic function $\Fmin$ on $\cbb$ which  has the following properties.
\begin{enumerate} 
\renewcommand{\theenumi}{(\alph{enumi})}
\renewcommand{\labelenumi}{\theenumi}
\item \label{it:FminNoZeros}
$\Fmin$ has neither poles nor zeros in the strip $0 \leq  \im\zeta \leq \pi$, except for a first-order zero at $\zeta=0$ in the case that $S(0)=-1$,
\item \label{it:FminIPiMinus}
  $ \Fmin(i\pi +\zeta) = \Fmin(i\pi - \zeta)$;
\item \label{it:FminMinus}
  $\Fmin(-\zeta) = S(-\zeta)\Fmin(\zeta)$;
\item \label{it:FminNorm}
  $\Fmin(i\pi) = 1$;
\item \label{it:FminBounds}
There are constants $a,b>0$ such that
$\big\lvert \log \lvert \Fmin(\zeta) \rvert \big\rvert \leq a \lvert \re \zeta \rvert + b $
 if $\lvert \re \zeta \rvert \geq 1$, $0 \leq \im \zeta \leq \pi$. 
\end{enumerate}
\end{definition}

Note that properties \ref{it:FminNoZeros} and \ref{it:FminBounds} automatically hold analogously for the strip $\rbb + i [\pi,2\pi]$ by property \ref{it:FminIPiMinus}. The first-order zero at $\zeta = 0$ (and analogously $\zeta = 2\pi$) must necessarily occur in the case $S(0)=-1$ due to \ref{it:FminMinus}.

The properties \ref{it:FminNoZeros}--\ref{it:FminBounds} actually fix $\Fmin$ uniquely if it exists, so that we can speak of \emph{the} minimal solution. We prove this in our context; cf.~\cite[p.~459]{KarowskiWeisz:1978}.

\begin{lemma}\label{lem:fminunique}
 For given $S$ fulfilling \eqref{eq:sprop}, there exists at most one minimal solution $\Fmin$.
\end{lemma}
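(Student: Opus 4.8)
The plan is to suppose that two minimal solutions $F_1$ and $F_2$ exist for the same scattering function $S$, and to show that their quotient $Q := F_1/F_2$ is identically $1$. First I would record how the defining properties of Definition~\ref{def:fmin} transfer to $Q$. Property \ref{it:FminMinus} holds for both $F_1$ and $F_2$ with the \emph{same} factor $S(-\zeta)$, so dividing the two relations gives $Q(-\zeta) = Q(\zeta)$, i.e.\ $Q$ is even; likewise property \ref{it:FminIPiMinus}, rewritten as $\Fmin(\zeta) = \Fmin(2i\pi - \zeta)$, yields $Q(\zeta) = Q(2i\pi - \zeta)$. Composing these two reflections (about $0$ and about $i\pi$) shows that $Q$ is periodic with period $2\pi i$.

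Next I would locate the zeros and poles of $Q$. By property \ref{it:FminNoZeros}, together with its analogue on $\rbb + i[\pi,2\pi]$ noted after the definition, each $F_j$ is free of zeros and poles on the closed strip $0 \le \im \zeta \le 2\pi$ apart from a \emph{common} first-order zero at $\zeta = 0$ and $\zeta = 2\pi i$ in the case $S(0) = -1$. Since these zeros cancel in the quotient, $Q$ is holomorphic and nowhere vanishing on the strip, and by the $2\pi i$-periodicity it is then entire and non-vanishing on all of $\cbb$.

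The analytic heart of the argument is a Liouville-type step. Being entire, $2\pi i$-periodic and non-vanishing, $Q$ factors as $Q(\zeta) = R(e^\zeta)$ with $R$ holomorphic and zero-free on $\cbb \setminus \{0\}$. The growth bound \ref{it:FminBounds}, applied to $F_1$ and $F_2$, controls $\log\lvert Q\rvert = \re \log Q$ by $\pm(2a\lvert\re\zeta\rvert + 2b)$ on the strip; since $\re\zeta = \log\lvert e^\zeta\rvert$, this becomes $\lvert R(u)\rvert \le e^{2b}\lvert u\rvert^{\pm 2a}$ as $\lvert u\rvert \to \infty$ and as $\lvert u\rvert \to 0$. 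Hence $R$ extends to a meromorphic function on the Riemann sphere whose only possible poles are at $0$ and $\infty$; being rational and zero-free on $\cbb\setminus\{0\}$, it must be a monomial $R(u) = c\, u^{k_0}$, so $Q(\zeta) = c\, e^{k_0\zeta}$.

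Finally I would pin down the two constants. Evenness of $Q$ forces $c\,e^{-k_0\zeta} = c\,e^{k_0\zeta}$ for all $\zeta$, hence $k_0 = 0$ and $Q \equiv c$; and the normalization \ref{it:FminNorm}, $F_1(i\pi) = F_2(i\pi) = 1$, gives $Q(i\pi) = 1$, so $c = 1$. Therefore $F_1 = F_2$, proving uniqueness. The step I expect to require the most care is the passage to $R$ on the sphere: one must check that the estimate \ref{it:FminBounds}, stated only for $\lvert\re\zeta\rvert \ge 1$ on a single strip, genuinely bounds $Q$ over a full period (using the $i\pi$-reflection to cover $\rbb + i[\pi,2\pi]$ and periodicity beyond) and that the resulting polynomial control at the punctures $u = 0$ and $u = \infty$ is enough to exclude essential singularities.
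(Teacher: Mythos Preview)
Your argument is correct and follows the same overall arc as the paper's: form the quotient, observe it is entire, $2\pi i$-periodic, even, and non-vanishing, then use the growth bound \ref{it:FminBounds} in a Liouville-type step to force it to be constant, and finally invoke \ref{it:FminNorm}. The genuine difference lies in the uniformizing variable. You pass to $u = e^\zeta$, which encodes only the $2\pi i$-periodicity; this yields a zero-free holomorphic $R$ on $\cbb\setminus\{0\}$ with at worst polynomial behaviour at $0$ and $\infty$, hence a monomial $c\,u^{k_0}$, and you then spend the evenness of $Q$ separately to kill $k_0$. The paper instead passes to $z = \cosh\zeta$, which encodes \emph{both} symmetries $Q(\zeta)=Q(-\zeta)$ and $Q(\zeta)=Q(\zeta+2\pi i)$ at once; this produces an entire $P$ on $\cbb$, shown to be a polynomial by the growth bound and then constant because it has no zeros. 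Your route is arguably the more classical one and makes the Liouville step on the Riemann sphere very transparent; the paper's $\cosh$ substitution is slightly slicker in that evenness is absorbed automatically, and it has the incidental advantage that the same device is reused verbatim later (in the classification of $F^{\alpha\beta}$, Proposition~\ref{prop:stressenergy}). Your closing caveat is well placed but easily dispatched: the paper already notes after Definition~\ref{def:fmin} that \ref{it:FminNoZeros} and \ref{it:FminBounds} extend to $\rbb + i[\pi,2\pi]$ via \ref{it:FminIPiMinus}, and on the remaining compact region $\lvert\re\zeta\rvert\le 1$ within one period $Q$ is continuous and hence bounded, so the polynomial control at $u=0,\infty$ is indeed uniform.
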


\begin{proof}
 Given two minimal solutions $\Fmin^A$, $\Fmin^B$, define $G(\zeta) := \Fmin^A(\zeta)/\Fmin^B(\zeta)$. By property \ref{it:FminNoZeros}, this function is analytic in a neighbourhood of the strip $\rbb + i [0,2\pi]$ -- the possible zeros of $\Fmin^{A,B}$ at $\zeta = 0$ and $\zeta=2\pi i$ cancel -- and by \ref{it:FminIPiMinus} and \ref{it:FminMinus}, we have 
 \begin{equation}\label{eq:gsymm}
G(\zeta+2\pi i) = G(\zeta) =G(-\zeta).
 \end{equation}
 We can therefore find an entire function $P$ such that $G(\zeta)=P(\cosh\zeta)$: We observe 
that $\cosh(\cdotarg)$ is bijective from the region $(0,\infty) \pm i (0,\pi)$ to the upper, respectively lower, half-plane, and use the properties \eqref{eq:gsymm} to accommodate the branch cuts of the inverse hyperbolic function. From property \ref{it:FminNorm}, this function fulfils the estimate
\begin{equation}
   \log |P(z)| \leq a' \lvert \re \operatorname{arcosh} z \rvert + b'
\end{equation}
with certain constants $a',b'>0$ and for large $\re \operatorname{arcosh} z $. Since $\re \operatorname{arcosh} z $ grows like $\log 2|z|$ for large $|z|$, this means that $P$ is polynomially bounded at infinity, and hence a polynomial. But due to property \ref{it:FminNoZeros}, $P$ has no zeros, and is therefore constant. Now from \ref{it:FminNorm}, $P(z)=P(-1)=1$. Hence $\Fmin^A=\Fmin^B$.
\end{proof}

Let us note a simple consequence: One checks that together with $\Fmin$, also $\zeta\mapsto\overline{\Fmin(-\bar\zeta)}$ fulfils properties \ref{it:FminNoZeros}--\ref{it:FminBounds}. Thus the lemma yields $\Fmin(-\bar\zeta)=\overline{\Fmin(\zeta)}$. Together with \ref{it:FminIPiMinus}, this shows that $\Fmin$ is symmetric and real-valued on the line $\rbb + i \pi$. We will use this fact frequently in the following.

In this article, we will always assume that a minimal solution $\Fmin$ exists. In fact, for the examples we mentioned, they are listed in Table~\ref{tab:models}. For the sinh-Gordon and related models, this involves the integral expression
\begin{equation}\label{eq:jdef}
 J_B(\zeta+i\pi) := 8\int_{0}^{\infty} \frac{dx}{x}\frac{\sinh\frac{x B}{4}\sinh\frac{x(2-B)}{4} \sinh\frac{x}{2}}{\sinh^2 x} 
\sin^2 \frac{x\zeta}{2\pi} \qquad \text{for } B \in (0,2)+i\rbb
\end{equation}
which is known from \cite{FringMussardoSimonetti:1993} (but note that we use a different normalization for $\Fmin$).
Due to the Riemann-Lebesgue lemma, $J_B(\zeta)$ converges to a constant as $\lvert\re\zeta\rvert \to \infty$, with $0 < \im \zeta < 2 \pi$ fixed.

For the generalized sinh-Gordon and Ising models in Table~\ref{tab:models}, $\Fmin$ can essentially be obtained as a product of the minimal solutions
of the corresponding sinh-Gordon or Ising factors, since properties \ref{it:FminIPiMinus}--\ref{it:FminBounds} in Def.~\ref{def:fmin} are again preserved under products.
However, in order to satisfy property \ref{it:FminNoZeros}, any possible double zeros of the product function at $\zeta=0$ need to be cancelled by dividing by appropriate powers of $(-i\sinh (\zeta/2))^2$.

\section{Energy density at one-particle level}\label{sec:edensity}

The first question we want to consider is which form the energy density operator can take in our models. More specifically, we ask 
what the functions $F_k[A]$ in the expansion \eqref{eq:expansion} can be if 
\begin{equation}
A =  T^{\alpha\beta}(f) = \int dt\, f(t) T^{\alpha\beta}(t,0)
\end{equation}
is a component of the stress-energy tensor smeared with a real-valued test function in time direction.  
The answer may appear obvious in models such as the free field or the sinh-Gordon model, where the energy density is linked to the classical Lagrangian and well studied. However, we  aim at an intrinsic characterization of the energy density within the quantum theory, and therefore we are looking for the most general form of the stress-energy tensor compatible with generic assumptions on this operator, which will be detailed below.

As announced in the introduction, we will consider only one-particle states of the theory, and evaluate the stress-energy tensor only in these. More precisely, we will consider the stress-energy tensor only in matrix elements of the form
\begin{equation}
  \hscalar{\varphi}{T^{\alpha\beta}(f) \, \psi} \quad \text{with } \varphi,\psi \in \dcal(\rbb) \subset \hcal_1.
\end{equation}
(The restriction of the quadratic form to smooth functions of compact support, i.e., $\varphi,\psi\in\dcal(\rbb)$, is perhaps too cautious -- the form can easily be extended to non-smooth and to sufficiently rapidly decaying wave functions, and we will in fact use piecewise continuous functions in the numeric evaluation in Sec.~\ref{sec:numerical}; but for the moment we restrict to $\dcal(\rbb)$ for simplicity.)

Writing $T^{\alpha\beta}(f)$ in expanded form as in \eqref{eq:expansion}, we see that only the coefficients $F_0$ and $F_2$ contribute to this one-particle matrix element. Since the coefficient $F_0[A]$ equals the vacuum expectation value of the operator, we can assume without loss that $F_0[T^{\alpha\beta}(f)]=0$; in fact, this is necessary for the energy density, since it would otherwise not integrate to the Hamiltonian $H$. This leaves us with
\begin{equation}
  \hscalar{\varphi}{T^{\alpha\beta}(f)\,\psi} = \int d\theta \, d\eta \, \overline{\varphi(\theta)} F_2[T^{\alpha\beta}(f)] (\theta+i0,\eta+i\pi-i0) \psi(\eta).
\end{equation}
Since we expect $T^{\alpha\beta}(f)$ to be a translation-covariant operator-valued distribution in $f$, we will assume that
\begin{equation}\label{eq:F2smearing}
 F_2[T^{\alpha\beta}(f)] (\theta,\eta+i\pi) = F^{\alpha\beta} (\theta,\eta)  \tilde f \big( \mu\cosh\theta - \mu\cosh\eta \big)
\end{equation}
with a function $F^{\alpha\beta}$ independent of $f$, where we take the Fourier transform with the convention $\tilde f (p) = \int dt\,f(t) e^{itp}$. 

The task is therefore to determine the possible form of $F^{\alpha\beta}$, starting from physical properties.
We will list these assumptions one by one and explain their motivation, but we skip details of how they are derived; that is, we will take these assumptions as axioms in our context.

The first set of conditions follows from the general properties of the expansion coefficients $F_k{[A]}$ of a local operator $A$ in integrable models, as derived in \cite{BostelmannCadamuro:characterization}.

\begin{enumerate}
\renewcommand{\theenumi}{(T\arabic{enumi})}
\renewcommand{\labelenumi}{\theenumi}

\item \label{it:Fab-analytic} 
$F^{\alpha\beta}$ are meromorphic functions on $\cbb^2$, analytic in a neighbourhood of the region $-\pi \leq \im(\theta-\eta)\leq \pi$.

This is due to the general analyticity properties of the coefficients $F_k$ \cite[property (FD1)]{BostelmannCadamuro:characterization}, together with the absence of ``kinematic poles'' in the specific case of the coefficient $F_2$ (from property (FD4) there).
 
\item \label{it:Fab-formfact} 
They have the symmetry properties
\begin{equation}
  F^{\alpha\beta}(\theta,\eta) 
  = S(\theta-\eta) F^{\alpha\beta}(\eta+i\pi,\theta-i\pi)
  = F^{\alpha\beta}(\eta+i\pi,\theta+i\pi).
\end{equation}

This is a rewritten form of the properties of ``$S$-symmetry'' and ``$S$-periodicity'' -- properties (FD2) and (FD3) in \cite{BostelmannCadamuro:characterization}.

\item\label{it:Fab-hermitean} 
Hermiticity of the observable $T^{\alpha\beta}$ is expressed as  
\begin{equation}
  F^{\alpha\beta}(\theta,\eta) = \overline{F^{\alpha\beta}(\eta,\theta)}	\quad \text{for all }\theta,\eta\in\rbb.
\end{equation} 

\item \label{it:Fab-growth}
We demand that there exist constants $k,\ell>0$ such that
\begin{equation}\label{eq:Fab-growth}
 \lvert F^{\alpha\beta}(\theta,\eta) \rvert \leq  \ell (\cosh \re \theta)^k (\cosh \re \eta)^k
 \quad
 \text{ whenever } 
 -\pi < \im (\theta-\eta) < \pi. 
\end{equation}

This is motivated by the bounds for $F_k{[A]}$ discussed in \cite{BostelmannCadamuro:characterization}; the condition guarantees that the smeared version $T^{\alpha\beta}(f)$ for every Schwartz function $f$ will fulfil, at one-particle-level, polynomial high-energy bounds in the form of  \cite[property (FD6)]{BostelmannCadamuro:characterization}.
\end{enumerate}

\noindent
Further conditions are derived from properties that one expects specifically of the stress-energy tensor $T^{\alpha\beta}$.
\begin{enumerate}
\renewcommand{\theenumi}{(T\arabic{enumi})}
\renewcommand{\labelenumi}{\theenumi}
\setcounter{enumi}{4}

\item\label{it:Fab-tenssymm}
 The stress-energy tensor is a symmetric tensor, which rewrites in our context as 
\begin{equation}\label{eq:Fab-tenssymm}
  F^{\alpha\beta}(\theta,\eta) = F^{\beta\alpha}(\theta,\eta).
\end{equation}

\item\label{it:Fab-covar} 
It is covariant under Lorentz transformations, which means in our terms, cf.~\cite[Prop.~3.9]{BostelmannCadamuro:expansion},
\begin{equation}\label{eq:Fab-covar}
  F^{\alpha\beta}(\theta-\lambda,\eta-\lambda) = \Lambda^\alpha_{\alpha'} \Lambda^\beta_{\beta'} F^{\alpha'\beta'}(\theta,\eta),
\end{equation}
 where $\Lambda$ is the boost matrix with rapidity parameter $\lambda$.

 \item\label{it:Fab-reflect} 
It is invariant under spacetime reflections, which by \cite[Thm.~5.4]{BostelmannCadamuro:characterization} translates to
\begin{equation}\label{eq:Fab-reflect}
  F^{\alpha\beta}(\theta+i\pi,\eta+i\pi) = F^{\alpha\beta}(\theta,\eta).
\end{equation}

\item \label{it:Fab-contin}
It fulfils the continuity equation ($\partial_\alpha T^{\alpha\beta} = 0$), which reads in our terms,
\begin{equation}\label{eq:Fab-contin}
  \big(p_\alpha(\theta)-p_\alpha(\eta)\big) F^{\alpha\beta}(\theta,\eta) = 0,
\end{equation}
where $p(\theta)=\mu(\cosh\theta,\sinh\theta)$.
\item \label{it:Fab-energy}
The (0,0)-component of the tensor integrates to the Hamiltonian, $\int dx\, T^{00}(t,x) = H$, which translates to
\begin{equation}\label{eq:Fab-energy}
  F^{00}(\theta,\theta) = \frac{\mu^2}{2\pi} \cosh^2 \theta.
\end{equation}
\end{enumerate}

Taking these properties as our starting point, we ask what functions $F^{\alpha\beta}$ are compatible with them. 
The answer is given in the following proposition.

\begin{proposition}\label{prop:stressenergy}
 Functions $F^{\alpha\beta}$ fulfil the properties \ref{it:Fab-analytic}--\ref{it:Fab-energy} if, and only if, there
 exists a real polynomial $P$ with $P(1)=1$ such that
 \begin{equation}\label{eq:stressenergych}
   F^{\alpha\beta}(\theta,\eta) = F^{\alpha\beta}_\mathrm{free}(\theta,\eta) P(\cosh(\theta-\eta)) \Fmin(\theta-\eta+i\pi),
 \end{equation}
 where
 \begin{equation}
  F^{\alpha\beta}_\mathrm{free}(\theta,\eta) = \frac{\mu^2}{2\pi} \begin{pmatrix}
                                 \cosh^2 (\frac{\theta+\eta}{2}) & \frac{1}{2} \sinh(\theta+\eta) \\
                                 \frac{1}{2} \sinh(\theta+\eta) & \sinh^2 (\frac{\theta+\eta}{2})
                              \end{pmatrix}.
 \end{equation}
\end{proposition}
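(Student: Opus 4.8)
The plan is to peel off the tensor structure first and then reduce the remaining scalar freedom to the uniqueness argument already carried out in Lemma~\ref{lem:fminunique}. I would begin with the continuity equation \ref{it:Fab-contin}. Writing the momentum difference in product form via $\cosh\theta-\cosh\eta=2\sinh\tfrac{\theta+\eta}{2}\sinh\tfrac{\theta-\eta}{2}$ and $\sinh\theta-\sinh\eta=2\cosh\tfrac{\theta+\eta}{2}\sinh\tfrac{\theta-\eta}{2}$, equation \eqref{eq:Fab-contin} becomes, after dividing off $2\mu\sinh\tfrac{\theta-\eta}{2}$, a pair of linear relations expressing the off-diagonal and $(1,1)$ components through $F^{00}$: one finds $F^{10}=\tanh\tfrac{\theta+\eta}{2}\,F^{00}$ and $F^{11}=\tanh\tfrac{\theta+\eta}{2}\,F^{01}$, and combining this with the tensor symmetry \ref{it:Fab-tenssymm} forces $F^{01}=F^{10}=\tanh\tfrac{\theta+\eta}{2}\,F^{00}$ and $F^{11}=\tanh^2\tfrac{\theta+\eta}{2}\,F^{00}$. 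These are exactly the ratios realized by $F^{\alpha\beta}_{\mathrm{free}}$, which can be written as $\tfrac{\mu^2}{2\pi}\,n^\alpha(\sigma)n^\beta(\sigma)$ with $n(\sigma)=(\cosh\sigma,\sinh\sigma)$ and $\sigma=\tfrac{\theta+\eta}{2}$. Hence every admissible $F^{\alpha\beta}$ must factor as $F^{\alpha\beta}=F^{\alpha\beta}_{\mathrm{free}}\,g$ with a single scalar $g:=F^{00}/F^{00}_{\mathrm{free}}$.

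Next I would exploit Lorentz covariance \ref{it:Fab-covar}. Since $F^{\alpha\beta}_{\mathrm{free}}$ is built from the Lorentz vector $n(\sigma)$ (a boost acting as a rapidity translation in $\sigma$), it satisfies \eqref{eq:Fab-covar} identically; dividing the covariance relation for $F^{\alpha\beta}$ by that for $F^{\alpha\beta}_{\mathrm{free}}$ then yields $g(\theta-\lambda,\eta-\lambda)=g(\theta,\eta)$, so $g$ depends only on $\zeta:=\theta-\eta$. I then translate the remaining axioms into conditions on $g(\zeta)$: property \ref{it:Fab-analytic} makes $g$ analytic in a neighbourhood of $-\pi\le\im\zeta\le\pi$; the growth bound \ref{it:Fab-growth} bounds $g$ exponentially in $\re\zeta$; hermiticity \ref{it:Fab-hermitean}, together with the reality of $\Fmin$ on $\rbb+i\pi$ noted after Lemma~\ref{lem:fminunique}, gives a reality condition; and the normalization \eqref{eq:Fab-energy}, read at $\eta=\theta$ where $F^{00}_{\mathrm{free}}=\tfrac{\mu^2}{2\pi}\cosh^2\theta$, gives $g(0)=1$. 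The decisive constraint is the $S$-symmetry \ref{it:Fab-formfact}: because $F^{\alpha\beta}_{\mathrm{free}}$ is invariant under $(\theta,\eta)\mapsto(\eta,\theta+2i\pi)$ (this shifts $\sigma$ by $i\pi$ and $n(\sigma+i\pi)=-n(\sigma)$ cancels in the tensor product), that property collapses to a single $S$-twisted functional equation for $g$ under $\zeta\mapsto-\zeta-2i\pi$, while the reflection invariance \ref{it:Fab-reflect} is automatically compatible and merely reconciles the two equivalent forms of \ref{it:Fab-formfact}.

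The key observation is that $\zeta\mapsto\Fmin(\zeta+i\pi)$ is tailored to solve precisely this twisted equation: by \ref{it:FminIPiMinus} it is even in $\zeta$, by \ref{it:FminMinus} it intertwines the scattering factor under the $2\pi i$-shift, and by \ref{it:FminNoZeros} it is analytic and nonvanishing in the open strip. I would therefore set $Q(\zeta):=g(\zeta)/\Fmin(\zeta+i\pi)$ and verify that the $S$-factors cancel, so that $Q$ is analytic in a neighbourhood of $\rbb+i[0,2\pi]$ (the only possible zero of $\Fmin(\cdot+i\pi)$, at the strip boundary in the case $S(0)=-1$, being matched by a corresponding zero of $g$) and satisfies the homogeneous relations $Q(\zeta+2i\pi)=Q(\zeta)=Q(-\zeta)$ — exactly \eqref{eq:gsymm}. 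From here I would run the argument of Lemma~\ref{lem:fminunique} verbatim: the substitution $Q(\zeta)=P(\cosh\zeta)$ defines an entire function $P$, the growth bound \ref{it:Fab-growth} combined with the lower bound \ref{it:FminBounds} on $\Fmin$ makes $P$ polynomially bounded and hence a polynomial; the reality condition forces $P$ real; and $Q(0)=g(0)/\Fmin(i\pi)=1$ gives $P(1)=1$. Unwinding $g=P(\cosh\zeta)\,\Fmin(\zeta+i\pi)$ and $F^{\alpha\beta}=F^{\alpha\beta}_{\mathrm{free}}\,g$ then produces \eqref{eq:stressenergych}. The converse direction is a direct substitution of \eqref{eq:stressenergych} into \ref{it:Fab-analytic}--\ref{it:Fab-energy}, reusing the same evenness, $i\pi$-symmetry and $S$-twisting of $\Fmin$.

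I expect the main obstacle to be exactly the $S$-symmetry step: correctly bookkeeping the $i\pi$ and $2i\pi$ shifts (and the attendant inversions $S(\zeta+i\pi)=S(\zeta)^{-1}$) so that $\Fmin(\cdot+i\pi)$ genuinely cancels the scattering factor, and confirming the zero/pole cancellations that place $Q$ within the hypotheses of the Lemma~\ref{lem:fminunique} argument. Everything else is either the elementary linear algebra of the continuity reduction or a transcription of the Liouville-type reasoning already established for the minimal solution; the real content of the proposition is that the stress-energy tensor at one-particle level is nothing but the minimal solution dressed with a free tensor prefactor and a single even polynomial in $\cosh(\theta-\eta)$.
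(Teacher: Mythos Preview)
Your proposal is correct and follows essentially the same route as the paper: reduce the tensor to $F^{00}$ via the continuity equation and tensor symmetry, strip off $\Fmin(\zeta+i\pi)$, and invoke the Liouville-type argument of Lemma~\ref{lem:fminunique} to force the remainder to be a real polynomial in $\cosh\zeta$ normalized at $1$. The only organizational difference is that the paper restricts to the line $\theta+\eta=0$ first (working with $G^{00}(\zeta):=F^{00}(\zeta/2,-\zeta/2)$, so the denominator $\mu^2/2\pi$ is constant) and applies covariance \ref{it:Fab-covar} at the end to recover general $\theta+\eta$, whereas you divide by $F^{00}_{\mathrm{free}}$ globally and use covariance up front to see that the quotient depends only on $\theta-\eta$; both orderings work, but the paper's avoids having to remark on the (removable) zeros of $\cosh^2\tfrac{\theta+\eta}{2}$ in the complex domain.
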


Note that $F^{\alpha\beta}_\mathrm{free}$ is the well-known one-particle expression of the ``canonical'' stress-energy tensor of the free Bose field.

\begin{proof}
It is straightforward to check that $F^{\alpha\beta}$ as given in \eqref{eq:stressenergych} fulfils all conditions \ref{it:Fab-analytic}--\ref{it:Fab-energy}, knowing that $F^{\alpha\beta}_\mathrm{free}$ fulfils them in the case $S=1$.

Thus, let $F^{\alpha\beta}$ fulfil conditions \ref{it:Fab-analytic}--\ref{it:Fab-energy}. 
We first use \ref{it:Fab-contin} with $\beta = 0$ and with $\beta=1$, along with \ref{it:Fab-tenssymm}, to obtain
\begin{equation}\label{eq:F11F00}
  F^{11}(\theta,\eta) = \tanh^2 \frac{\theta+\eta}{2} F^{00}(\theta,\eta).
\end{equation}
Now we consider the functions $G^{\alpha\beta}(\zeta):=F^{\alpha\beta}(\zeta/2,-\zeta/2)$, which are meromorphic by \ref{it:Fab-analytic}. Note that \ref{it:Fab-contin} with $\beta=0$ implies
\begin{equation}
 \underbrace{\big(p_0(\zeta)-p_0(-\zeta)\big)}_{=0} G^{00}(2\zeta) 
 +\underbrace{\big(p_1(\zeta)-p_1(-\zeta)\big)}_{=-2 \mu \sinh \zeta} G^{10}(2\zeta) = 0
\end{equation}
and hence $G^{10}=0$. Then $G^{01}=0$ by \ref{it:Fab-tenssymm}. Also, \eqref{eq:F11F00} leads to $G^{11}=0$, and the only nonzero component of $G$ is $G^{00}$. From \ref{it:Fab-formfact} and \ref{it:Fab-reflect} one concludes
\begin{equation}\label{eq:g00symm}
   G^{00}(-\zeta) = G^{00}(\zeta), \quad
   G^{00}(\zeta-i\pi) = S(\zeta) \, G^{00}(\zeta+i\pi).   
\end{equation}
Now set
\begin{equation}
  Q(\zeta) := \frac{2\pi}{\mu^2} G^{00}(\zeta)/\Fmin(\zeta+i\pi).
\end{equation}
This $Q$ is analytic in a neighbourhood of the strip $\rbb+i[-\pi,\pi]$. (Note that in the case $S(0)=-1$, the zeros of the denominator at $\zeta = \pm i \pi$ are cancelled by corresponding zeros of the numerator which exist due to \eqref{eq:g00symm}.)
The symmetry relations \eqref{eq:g00symm} and Def.~\ref{def:fmin}\ref{it:FminIPiMinus},\ref{it:FminMinus} imply that
\begin{equation}\label{eq:qsymm}
 Q(-\zeta) = Q(\zeta), \quad
 Q(\zeta + i \pi  ) = Q(\zeta - i \pi).
\end{equation}
Arguing as in the proof of Lemma~\ref{lem:fminunique}, we can therefore find an entire function $P$ such that $Q(\zeta)=P(\cosh\zeta)$.
From property \ref{it:Fab-growth} and Def.~\ref{def:fmin}\ref{it:FminBounds}, this $P$ fulfils the estimates
\begin{equation}
  \lvert  P(z) \rvert 
  \leq \frac{2\pi}{\mu^2} \frac{\big\lvert G^{00}(\operatorname{arcosh} z) \big\rvert }{\lvert \Fmin(\operatorname{arcosh} (z) + i \pi) \rvert}   \leq \ell' \frac{\big( \cosh \tfrac{1}{2} \re \operatorname{arcosh} z \big)^{2k}}{ (\cosh \re \operatorname{arcosh} z)^{k'} }
  \leq \ell' \big( 2 |z| + 1)^{k-k'}
\end{equation}
with some constants $\ell', k'$. That is, $P$ is a polynomially bounded entire function, and hence a polynomial. From \ref{it:Fab-hermitean}, we can conclude that $P(\bar z)=\overline{P(z)}$ for all $z$, thus the coefficients of $P$ are real. Also, \ref{it:Fab-energy} implies $G^{00}(0)=\mu^2/2\pi$ and hence $P(1)=1$. Thus $P$ has the properties claimed in the proposition. Combining our results, we have shown that
\begin{equation}
  F^{\alpha\beta}\big(\tfrac{\zeta}{2},-\tfrac{\zeta}{2}\big) = \frac{\mu^2}{2\pi} \begin{pmatrix} 1 & 0 \\ 0 & 0 \end{pmatrix} P(\cosh \zeta) \Fmin(\zeta+i\pi).
\end{equation}
This means that \eqref{eq:stressenergych} holds in the case $\theta+\eta=0$. But then it holds for any value of $\theta+\eta$, since both sides of \eqref{eq:stressenergych} fulfil the covariance condition \ref{it:Fab-covar}.
\end{proof}

In the following, when we speak of the stress-energy tensor of a model, we will always refer to one of the form \eqref{eq:F2smearing}, with $F^{\alpha\beta}$ as in Proposition~\ref{prop:stressenergy}. We will often abbreviate
\begin{equation}
  F_P(\zeta) := P(\cosh \zeta) \Fmin(\zeta+i\pi),
\end{equation}
noting that $F_P$ enjoys most of the defining properties of $\Fmin$ (namely, Def.~\ref{def:fmin}\ref{it:FminIPiMinus}--\ref{it:FminBounds} with shifted argument, but not \ref{it:FminNoZeros}). Also, $F_P$ is symmetric and real-valued on the real line.
The expectation value of the energy density now becomes
\begin{equation}\label{eq:t00f-expect}
  \hscalar{\varphi}{T^{00}(f)\,\varphi} = \frac{\mu^2}{2\pi} \int d\theta \, d\eta \, \overline{\varphi(\theta)} \varphi(\eta)
 \cosh^2 \frac{\theta+\eta}{2} \, F_P(\theta-\eta)\,
  \tilde f(\mu\cosh\theta-\mu\cosh\eta) .
\end{equation}

Proposition~\ref{prop:stressenergy} shows that on the one-particle level, we recover the well-known ``canonical form'' of the energy density of the free field and of the sinh-Gordon model \cite{FringMussardoSimonetti:1993}, and the $T^{00}$ considered for the massive Ising model in \cite{BostelmannCadamuroFewster:ising}, up to a possible polynomial factor $P$ in $\cosh(\theta-\eta)$. We emphasize that, while one might expect $P=1$, all our assumptions so far are perfectly compatible with a more general polynomial $P$. However, we will see later (in Sec.~\ref{sec:models}) that the choice of $P$ is restricted, in some cases uniquely to $P=1$, if we demand that quantum energy inequalities hold.  

\section{States with negative energy density}\label{sec:negative}

As the next question about properties of the energy density, we will ask whether single-particle states with negative energy density exist at all; more specifically, whether $ \hscalar{\varphi}{T^{00}(g^2)\varphi}$ can be negative if $\varphi\in\dcal(\rbb)$ and $g$ is a real-valued Schwartz function, i.e., $g \in\scal_\rbb(\rbb)$. The example of the free field with canonical energy density ($S=1$, $F_P=1$) shows that this is not guaranteed: In this specific case, $T^{00}(g^2)$ is known to be positive between one-particle states. However, as we shall see in a moment, the introduction of interaction quite generically leads to negative energy densities.

We will exhibit these negative energy densities by explicitly constructing corresponding states $\varphi$. In preparation, we fix a nonnegative, smooth, even function $\chi$ with support in $[-1,1]$. For $q\geq 1$, $\rho>0$, we set $\chi_{q,\rho} (\theta) := \rho^{-1/q} \lVert \chi \rVert_q^{-1} \chi(\theta/\rho)$, so that $\chi_{q,\rho}$ has support in $[-\rho,\rho]$ and is normalized with respect to the $L^q$ norm $ \lVert \cdotarg \rVert_q$.

\begin{proposition}\label{prop:negative}
Suppose that there is $\theta_P \in \rbb$ such that $\lvert F_P(\theta_P) \rvert > 1$. Then there exist $g \in \scal_\rbb(\rbb)$ and $\varphi \in \dcal(\rbb)$ such that  $\hscalar{\varphi}{T^{00}(g^2)\varphi} < 0$.
\end{proposition}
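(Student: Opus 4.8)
The plan is to produce the negative-energy state inside a two-dimensional family of trial functions, reducing the claim to the indefiniteness of a $2\times 2$ real quadratic form. Recall from \eqref{eq:t00f-expect} that, for real $\varphi$ and $f=g^2$,
\[
  \hscalar{\varphi}{T^{00}(g^2)\varphi} = \frac{\mu^2}{2\pi}\int d\theta\,d\eta\; \varphi(\theta)\varphi(\eta)\,K(\theta,\eta), \qquad K(\theta,\eta):=\cosh^2\tfrac{\theta+\eta}{2}\,F_P(\theta-\eta)\,\widetilde{g^2}(\mu\cosh\theta-\mu\cosh\eta)
\]
where, taking $g$ real and even, $K$ is symmetric and real-valued. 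I would take $\varphi = x\,v_a + y\,v_b$ with $x,y\in\rbb$, where $v_a,v_b\in\dcal(\rbb)$ are narrow nonnegative bumps of unit integral centred at rapidities $a=s+\theta_P/2$ and $b=s-\theta_P/2$ (so $a-b=\theta_P$), with $s\in\rbb$ a boost to be chosen. Then the expectation value is a quadratic form $Q(x,y)=M_{aa}x^2+2M_{ab}xy+M_{bb}y^2$ with matrix entries $M_\bullet=\tfrac{\mu^2}{2\pi}\int v\,v\,K$.

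As the bump width tends to $0$, continuity of $K$ and $F_P(0)=P(1)\Fmin(i\pi)=1$ give $M_{aa}\to\tfrac{\mu^2}{2\pi}\cosh^2 a\,\widetilde{g^2}(0)$, $M_{bb}\to\tfrac{\mu^2}{2\pi}\cosh^2 b\,\widetilde{g^2}(0)$, and $M_{ab}\to\tfrac{\mu^2}{2\pi}\cosh^2 s\,F_P(\theta_P)\,\widetilde{g^2}(\Delta)$, where $\Delta:=\mu\cosh a-\mu\cosh b=2\mu\sinh s\,\sinh\tfrac{\theta_P}{2}$. Since $M_{aa},M_{bb}>0$, the form is indefinite — and hence attains a negative value at some real $(x,y)$ — as soon as $\det<0$, i.e.
\[
  \frac{\cosh^2 s}{\cosh a\,\cosh b}\,\lvert F_P(\theta_P)\rvert\,\frac{\lvert\widetilde{g^2}(\Delta)\rvert}{\widetilde{g^2}(0)} > 1 .
\]

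The geometric factor $\cosh^2 s/(\cosh a\cosh b)=(\cosh 2s+1)/(\cosh 2s+\cosh\theta_P)$ is the crux: it equals $\cosh^{-2}\tfrac{\theta_P}{2}\le 1$ at $s=0$ but tends to $1$ as $s\to\infty$. This explains why the naive symmetric placement $s=0$ fails — there the condition reduces to $\lvert F_P(\theta_P)\rvert>\cosh^2\tfrac{\theta_P}{2}$, far stronger than the hypothesis — whereas boosting to large $s$ makes the geometric factor as close to $1$ as desired, so that $\lvert F_P(\theta_P)\rvert>1$ alone can push the product above $1$, provided the last factor $\lvert\widetilde{g^2}(\Delta)\rvert/\widetilde{g^2}(0)$ stays near $1$.

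The main obstacle is precisely this last factor: boosting forces $\Delta\to\infty$, and $\widetilde{g^2}(\Delta)\to 0$ for any fixed Schwartz $g$ by decay of the Fourier transform. I would resolve this by choosing $g$ \emph{after} $s$. First fix $s$ so large that $\cosh^2 s/(\cosh a\cosh b)\cdot\lvert F_P(\theta_P)\rvert=:1+2\epsilon>1$, which also freezes $\Delta$. Then take a narrow rescaling $g(t)=h(t/\delta)$ of a fixed real even $h$, for which $\widetilde{g^2}(\Delta)/\widetilde{g^2}(0)=\widetilde{h^2}(\delta\Delta)/\widetilde{h^2}(0)\to 1$ as $\delta\to 0$; a small $\delta$ makes this ratio exceed $(1+\epsilon)/(1+2\epsilon)$, so the displayed product exceeds $1+\epsilon>1$ in the zero-width limit. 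As this inequality is strict, it survives for bumps $v_a,v_b$ of sufficiently small but positive width, where $Q$ is genuinely indefinite; any real $(x,y)$ with $Q(x,y)<0$ then yields the desired $\varphi=x v_a+y v_b\in\dcal(\rbb)$ and $g\in\scal_\rbb(\rbb)$. One needs only that $\widetilde{g^2}(0)=\lVert g\rVert_2^2>0$ and that $\widetilde{h^2}$ is real and positive near the origin, both immediate for real even $h$.
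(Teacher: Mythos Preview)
Your proof is correct and follows essentially the same approach as the paper: a two-bump trial state centred at $s\pm\theta_P/2$, reduction to a $2\times2$ determinant, and the key observation that the geometric factor $\cosh^2 s/(\cosh a\,\cosh b)\to 1$ as $s\to\infty$. The only organisational difference is in handling $g$: the paper first establishes negativity of the unsmeared expression $X$ (your kernel at $t=0$) and then invokes continuity in $t$ to pick a suitable $g$, whereas you keep $g$ in the matrix and dispose of the extra factor $\widetilde{g^2}(\Delta)/\widetilde{g^2}(0)$ by rescaling $g$; these amount to the same manoeuvre of concentrating $g$ near $t=0$.
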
 

\begin{proof}
We will show that, with suitable choice of $\varphi$, one has
\begin{equation}\label{eq:Xneg}
  0 >   \int d\theta \, d\eta \, \overline{\varphi(\theta)} \varphi(\eta)
 \cosh^2 \frac{\theta+\eta}{2} \, F_P(\theta-\eta)  =: X.
\end{equation}
($X$ is the expectation value of $T^{00}(0)$ up to a factor.) Rewriting \eqref{eq:t00f-expect} as
\begin{equation}\label{eq:t00g2}
  \hscalar{\varphi}{T^{00}(g^2)\,\varphi} = \frac{\mu^2}{2\pi} \int dt\, g^2(t) \int d\theta \, d\eta \, \overline{\varphi(\theta)} \varphi(\eta)
 \cosh^2 \frac{\theta+\eta}{2} \, F_P(\theta-\eta)\,e^{it\mu(\cosh\theta-\cosh\eta)},
\end{equation}
and noticing that the inner integral expression is real, continuous in $t$, and gives $X<0$ at $t=0$, it is then clear that we can choose $g$ so that \eqref{eq:t00g2} becomes negative.

To achieve \eqref{eq:Xneg}, we will choose the wave function $\varphi $ as
\begin{equation}
\varphi(\theta)=\sum_{j=1}^{2} \beta_j  \chi_{1,\rho}(\theta - \gamma_j),
\end{equation}
where $\beta_j\in\cbb$, $\gamma_j \in \rbb$ and $\rho>0$ will be specified later; the quantity $X$ then depends on these parameters. To show that $X<0$ for some $\rho>0$, it suffices to show that $X$ converges to a negative limit as $\rho \searrow 0$. Noting that $\chi_{1,\rho} (\theta) \to \delta(\theta)$ in this limit, one obtains from \eqref{eq:Xneg} that
\begin{equation}
 \lim_{\rho \searrow 0} X =  \betav^\ast  M \betav, 
\quad \text{where} \quad
 M_{jk} = \cosh^2 \frac{\gamma_j + \gamma_k}{2} F_P(\gamma_j-\gamma_k).
\end{equation} 
This expression is negative for suitable $\betav=(\beta_1,\beta_2)$ if the determinant of the matrix $M$ is negative. Setting $\gamma_1 := \gamma+\theta_P/2$, $\gamma_2 := \gamma-\theta_P/2$, with $\gamma\in\rbb$ still to be chosen, one computes
\begin{equation}
  \det M = \cosh^2 (\gamma+\theta_P/2)\cosh^2 (\gamma-\theta_P/2) - \cosh^4 \gamma F_P^2(\theta_P). 
\end{equation}
Since $ \cosh (\gamma + \theta_P/2)\cosh (\gamma - \theta_P/2) / \cosh^2 \gamma \to 1$ as $\gamma \to \infty$, and since $F_P^2(\theta_P)>1$ by assumption, $\det M$ does indeed become negative for sufficiently large $\gamma$, which concludes the proof.
\end{proof}

In particular, the conditions of Proposition~\ref{prop:negative} are met in the sinh-Gordon and the Ising models for any choice of $P$, as well as in the free model if $P \not\equiv 1$. Thus single-particle states with negative energy density exist in generic situations. 

Under stricter assumptions on the function $F_P$, we can in fact show a significantly stronger result: If $F_P$ grows stronger then a certain rate, then the negative expectation values of $T^{00}(g^2)$ become so large that quantum energy inequalities cannot hold.

\begin{proposition}\label{prop:qeinogo}
 Suppose there exist $\theta_0\geq0$ and $c>\frac{1}{2}$ such that 
 \begin{equation}\label{eq:fmingrowfast}
    \forall \theta \geq \theta_0: \quad F_P(\theta) \geq c\, \cosh \theta.
 \end{equation}
 Let $g \in \scal_\rbb(\rbb)$, $g \not\equiv 0$. Then, there exists a sequence $(\varphi_{j})_{j \in\mathbb{N}}$ in $\dcal(\rbb)$, $\lVert \varphi_{j} \rVert=1$,
 such that
 \begin{equation}\label{eq:noqei}
  \hscalar{\varphi_{j}}{T^{00}(g^2) \varphi_{j} } \to - \infty \quad \text{as } j \to \infty.
 \end{equation} 
\end{proposition}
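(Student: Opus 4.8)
The plan is to prove the statement by exhibiting the divergent sequence explicitly, using the same two-bump interference mechanism as in the proof of Proposition~\ref{prop:negative}, but now carefully balancing the widths of the bumps against their separation so that, with $\lVert\varphi_j\rVert=1$ held fixed, the negative cross term outgrows the positive direct terms without bound. Since \eqref{eq:t00f-expect} with $f=g^2$ represents $T^{00}(g^2)$ at one-particle level as an integral operator with kernel $\frac{\mu^2}{2\pi}\cosh^2\tfrac{\theta+\eta}{2}\,F_P(\theta-\eta)\,\widetilde{g^2}(\mu\cosh\theta-\mu\cosh\eta)$, the task reduces to showing this form is unbounded below on the unit sphere.

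Concretely, I would set $\varphi_\gamma:=\tfrac{1}{\sqrt2}\big(\chi_{2,\rho}(\cdotarg-\gamma)-\chi_{2,\rho}(\cdotarg+\gamma)\big)$, with separation $\gamma\to\infty$ and width $\rho=\rho(\gamma)\to0$ to be fixed. For $\rho<\gamma$ the two bumps have disjoint support, so $\lVert\varphi_\gamma\rVert^2=1$ because $\chi_{2,\rho}$ is $L^2$-normalised. Substituting $\theta=\pm\gamma+s$, $\eta=\pm\gamma+u$ and splitting the double integral into the two \emph{direct} contributions (both arguments near $+\gamma$, or both near $-\gamma$) and the two \emph{cross} contributions (one near $+\gamma$, the other near $-\gamma$) produces a real symmetric $2\times2$ form whose off-diagonal samples $F_P$ at argument $\approx 2\gamma$, where $F_P(2\gamma)\geq c\cosh 2\gamma$ by \eqref{eq:fmingrowfast}, whereas the diagonal only sees $F_P$ near $0$, where $F_P(0)=1$. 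The cross terms enter with a negative sign, so I would take the combination $\tfrac1{\sqrt2}(\chi^+-\chi^-)$ precisely to exploit this.

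The crucial step is the choice of $\rho$. I require $\rho e^{\gamma}\to0$: then on both supports the energy difference $\mu(\cosh\theta-\cosh\eta)$, bounded by $2\mu\sinh\gamma\sinh\rho=O(\rho e^{\gamma})$, tends to $0$ uniformly, so that $\widetilde{g^2}(\mu\cosh\theta-\mu\cosh\eta)$ may be replaced by the constant $\widetilde{g^2}(0)=\lVert g\rVert_2^2>0$ up to vanishing relative error. This is exactly what makes the argument robust for \emph{every} $g\not\equiv0$: the relevant positive constant is $\lVert g\rVert_2^2$ (never zero for $g\neq0$) rather than $g(0)^2$. With this replacement, using $\cosh^2(\gamma+\tfrac{s+u}{2})\sim\tfrac14 e^{2\gamma}$ on the diagonal against $\cosh^2(\tfrac{s+u}{2})\to1$ on the cross term and $(\int\chi_{2,\rho})^2=\rho\,C$ with $C=\lVert\chi\rVert_2^{-2}(\int\chi)^2>0$, the leading asymptotics become $\mathcal{E}_{\mathrm{diag}}\sim\frac{\mu^2}{2\pi}\lVert g\rVert_2^2\,\tfrac14 e^{2\gamma}\rho\,C$ and $\lvert\mathcal{E}_{\mathrm{cross}}\rvert\gtrsim\frac{\mu^2}{2\pi}\lVert g\rVert_2^2\,\tfrac{c}{2}e^{2\gamma}\rho\,C$, so that $\lvert\mathcal{E}_{\mathrm{cross}}\rvert/\mathcal{E}_{\mathrm{diag}}\to 2c>1$.

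Finally I would impose, compatibly, $\rho e^{2\gamma}\to\infty$ — for instance $\rho=e^{-3\gamma/2}$ satisfies both $\rho e^{\gamma}\to0$ and $\rho e^{2\gamma}\to\infty$ — so that $\mathcal{E}_{\mathrm{diag}}\sim\text{const}\cdot e^{2\gamma}\rho\to+\infty$; then $\hscalar{\varphi_\gamma}{T^{00}(g^2)\varphi_\gamma}=\mathcal{E}_{\mathrm{diag}}-\lvert\mathcal{E}_{\mathrm{cross}}\rvert\sim(1-2c)\,\mathcal{E}_{\mathrm{diag}}\to-\infty$, since $1-2c<0$. Choosing $\gamma=\gamma_j\to\infty$ and $\rho_j=e^{-3\gamma_j/2}$ then yields the required sequence $(\varphi_{j})$ in $\dcal(\rbb)$ with $\lVert\varphi_j\rVert=1$. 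I expect the main obstacle to be exactly this simultaneous limit: one must make the bumps narrow enough ($\rho e^{\gamma}\to0$) that $\widetilde{g^2}$ is effectively constant and the mechanism becomes insensitive to the detailed shape of $g$, yet not so narrow ($\rho e^{2\gamma}\to\infty$) that the $L^2$-normalisation washes out the growth of $F_P(2\gamma)$; keeping the error estimates for the replacement of $\widetilde{g^2}$, of $\cosh^2$, and of $F_P$ uniform across this joint limit is the technical heart of the proof.
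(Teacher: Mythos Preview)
Your proposal is correct and follows essentially the same two-bump interference argument as the paper: the paper likewise takes $\varphi_j=\beta_{j,1}\chi_{2,\rho_j}(\cdot-j)+\beta_{j,2}\chi_{2,\rho_j}(\cdot+j)$, reduces to the $2\times2$ matrix whose relevant eigenvalue is $M_{j,11}-M_{j,12}$ (your fixed choice $\tfrac{1}{\sqrt2}(1,-1)$), and shows this eigenvalue diverges to $-\infty$ by the same comparison of direct versus cross terms. The only cosmetic difference is the scaling of the bump width---the paper takes $\rho_j=a e^{-j}$ with $a$ small (so $\rho e^\gamma$ stays bounded and $\widetilde{g^2}\geq\tfrac12\widetilde{g^2}(0)$), whereas you take $\rho=e^{-3\gamma/2}$ (so $\rho e^\gamma\to0$ and $\widetilde{g^2}\to\widetilde{g^2}(0)$); both choices satisfy $\rho e^{2\gamma}\to\infty$ and lead to the same conclusion.
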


\begin{proof}
 We set
 \begin{equation}
  \varphi_{j}(\theta) := \beta_{j,1} \, \chi_{2,\rho_j}(\theta-j) + \beta_{j,2} \,  \chi_{2,\rho_j} (\theta+j),
 \end{equation}
 where $\betav_{j} = (\beta_{j,1},\beta_{j,2}) \in \cbb^{2}$ fulfil $\lVert \betav_{j} \rVert=1$ (but are otherwise arbitrary), 
 and where $\rho_j\in(0,1)$ is a null sequence to be specified later.
 With this choice, we have $\|\varphi_j\|=1$, and one computes from \eqref{eq:t00f-expect} that
 \begin{equation}
  \hscalar{\varphi_{j}}{T^{00}(g^2) \varphi_{j} } =  \frac{\mu^2}{2 \pi} \,\betav_j^\ast M_j \betav_j \, ,
 \end{equation} 
where $M_j$ is the $2 \times 2$ matrix
\begin{equation}\label{eq:ajmn}
  M_{j,mn} =  \int d\theta\,d\eta\,
	     h_{j,mn} (\theta,\eta) \widetilde{g^2}\big(\mu k_j(\theta,\eta)\big) \,
                     \chi_{2,\rho_j}(\theta) \, \chi_{2,\rho_j}(\eta)
\end{equation}
with the functions
\begin{align}
  h_{j,11}(\theta,\eta) = h_{j,22}(\theta,\eta) &= \cosh^2\big( j + \tfrac{\theta+\eta}{2} \big) F_P(\theta-\eta),
  \\
  h_{j,12}(\theta,\eta) = h_{j,21}(\theta,\eta) &= \cosh^2\big( \tfrac{\theta-\eta}{2} \big) F_P(2j+\theta+\eta),
  \\
  k_j(\theta,\eta) &= 2 \sinh(j+\tfrac{\theta+\eta}{2}) \sinh(\tfrac{\theta-\eta}{2}).
 \end{align}
It enters here that $\chi$ is even. One of the eigenvalues of $M_j$ is $\lambda_j = M_{j,11} - M_{j,12}$, and we will show that $\lambda_j\to -\infty$, proving that \eqref{eq:noqei} holds for a suitable choice of $\betav_j$.
 
 To that end, we establish estimates on $h_{j,11}$, $h_{j,12}$ and $k_{j}$ for  $\theta,\eta\in [-\rho_j,\rho_j]$, that is,  in the region where the integrand of \eqref{eq:ajmn} is nonvanishing.  First, continuity of $F_P$ and $F_P(0)=1$ imply that $F_P(\theta-\eta) < c + \frac{1}{2}$ if $\rho_j$ is small; note that $c>\frac{1}{2}$ enters here. Estimating $\cosh^2 x \leq 1 + e^{2x}/4$ ($x \geq 0$), we then have in the relevant range for $\theta,\eta$,
 \begin{equation}\label{eq:hj11est}
  h_{j,11} (\theta,\eta)  \leq \big(1 + \tfrac{1}{4} e^{2j} e^{2 \rho_j}\big)\big(c+\tfrac{1}{2}\big).
 \end{equation}
 Further, the growth condition \eqref{eq:fmingrowfast} implies for $j > \theta_0/2+1$,
\begin{equation}\label{eq:hj12est}
   h_{j,12} (\theta,\eta)  \geq c \cosh(2j + \theta+\eta) \geq \frac{c}{2} e^{2j} e^{-2\rho_j}.
\end{equation}
Now \eqref{eq:hj11est} and \eqref{eq:hj12est} combine to give
\begin{equation}\label{eq:hjdiffest}
 h_{j,11} (\theta,\eta) - h_{j,12} (\theta,\eta) \leq c +  \frac{1}{2} +\tfrac{1}{4} e^{2j} \underbrace{\big( e^{2 \rho_j}(c+\tfrac{1}{2}) - 2 c e^{-2 \rho_j} \big)}_{\to \,\frac{1}{2} - c\, < 0}
  \leq 2c - c' e^{2j} < 0
\end{equation}
with some constant $c'>0$ and for large $j$.
Finally, since $|\theta-\eta|< 2 \rho_j<2$ in the integrand,
\begin{equation}
  \lvert k_j(\theta,\eta)\rvert \leq 2(e^j e^{(\theta + \eta)/2} )\Big( 2 \frac{|\theta-\eta|}{2}\Big) \leq 12 e^j \rho_j.
\end{equation}
 Now setting specifically $\rho_j = a e^{-j}$, with $a>0$ still to be specified, we have
 $ \lvert k_j(\theta,\eta)\rvert \leq 12a$ independent of $j$. 
 Noting that $\widetilde{g^2}(0) = \int dt\, g^2(t) > 0$, we can achieve with a suitable choice of $a$ that
 \begin{equation}\label{eq:g2lower}
   \widetilde{g^2} (\mu k_j(\theta,\eta)) \geq \frac{1}{2} \widetilde{g^2}(0) > 0.
 \end{equation}
Using \eqref{eq:hjdiffest} and \eqref{eq:g2lower} in the integrand of \eqref{eq:ajmn}, we then obtain
\begin{equation}
  \lambda_j = M_{j,11}-M_{j,12} \leq (2c - c' e^{2j}) \frac{1}{2}\widetilde{g^2}(0) \big( \lVert \chi_{2,\rho_j} \rVert_1 \big)^2
  =  \frac{a}{2} \widetilde{g^2}(0) (2c e^{-j} - c' e^{+j}) \big(\rho_j^{-1/2} \lVert \chi_{2,\rho_j} \rVert_1 \big)^2  .
\end{equation}
Here $\rho_j^{-1/2} \lVert \chi_{2,\rho_j}\rVert_1$ is actually independent of $\rho_j$. Hence $\lambda_j \to -\infty$ as $j \to\infty$, which concludes the proof.
\end{proof}

\section{Quantum energy inequalities}\label{sec:qei}

We now turn to the existence of quantum energy inequalities, i.e., we want to show that the operator $T^{00}(g^2)$ is bounded below at one-particle level. As we have seen in Proposition~\ref{prop:qeinogo}, this can be true only if the function $F_P$ does not grow too fast. The main goal of the section is the following theorem, which establishes a QEI under certain bounds on $F_P$.

\begin{theorem}\label{thm:qei}
 Suppose that there exist constants $\theta_0 \geq 0$, $\lambda_0 > 0$, and $0 < c < \frac{1}{2}$ such that
 \begin{equation}\label{eq:fpbounds}
   |F_P(\zeta) | \leq c \, \cosh \re\zeta \quad \text{whenever} \; \lvert  \re\zeta \rvert \geq \theta_0, \; \lvert \im\zeta \rvert < \lambda_0.
 \end{equation}
 Further, let $g \in\scal_\rbb(\rbb)$. Then, there exists $c_g>0$ such that
 \begin{equation}\label{eq:qitoshow}
    \forall \varphi \in \dcal(\rbb): \quad \langle  \varphi, T^{00}(g^2) \varphi \rangle \geq - c_g \|\varphi\|^2.
 \end{equation}
 The constant $c_g$ depends on $g$ (and on $F_P$, hence on $P$ and $S$) but not on $\varphi$.
\end{theorem}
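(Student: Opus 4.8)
The plan is to show that the operator on $\hcal_1$ with integral kernel
\begin{equation*}
K(\theta,\eta)=\frac{\mu^2}{2\pi}\cosh^2\tfrac{\theta+\eta}{2}\,F_P(\theta-\eta)\,\widetilde{g^2}\big(\mu\cosh\theta-\mu\cosh\eta\big)
\end{equation*}
is bounded below. The backbone of the argument is a positivity skeleton. First, writing the smearing as a Gram form,
\begin{equation*}
\widetilde{g^2}(E_\theta-E_\eta)=\int\frac{d\omega}{2\pi}\,\tilde g(E_\theta+\omega)\,\overline{\tilde g(E_\eta+\omega)},\qquad E_\theta:=\mu\cosh\theta,
\end{equation*}
exhibits $\widetilde{g^2}(E_\theta-E_\eta)$ as a positive-definite kernel. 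Second, $\cosh\frac{\theta+\eta}{2}=\frac12 e^{\theta/2}e^{\eta/2}+\frac12 e^{-\theta/2}e^{-\eta/2}$ is a sum of two real rank-one kernels, hence positive-definite, and by the Schur product theorem so is $\cosh^2\frac{\theta+\eta}{2}$. If $F_P(\theta-\eta)$ were itself positive-definite, the product kernel would be manifestly $\geq 0$; the whole content of the theorem is that the \emph{failure} of positive-definiteness of $F_P$ is controlled precisely when $c<\frac12$.

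The quantitative mechanism rests on the elementary identity
\begin{equation*}
\cosh^2\tfrac{\theta+\eta}{2}-\tfrac12\cosh(\theta-\eta)=\tfrac12+\sinh\theta\sinh\eta,
\end{equation*}
whose right-hand side is again a sum of real rank-one kernels and so positive-definite. Thus $\cosh^2\frac{\theta+\eta}{2}\geq\frac12\cosh(\theta-\eta)$ in the sense of kernels, with coefficient exactly $\frac12$. Since the hypothesis \eqref{eq:fpbounds} gives $|F_P|\leq c\cosh$ with $c<\frac12$, the growth of $F_P$ is dominated by the available positivity with a strict margin $\frac12-c>0$; this is the same threshold that makes Proposition~\ref{prop:qeinogo} sharp, and it is the numerical comparison I would aim to turn into an operator inequality.

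To convert domination into a genuine lower bound I would localize in energy. Setting $\Phi_\omega(\theta):=\varphi(\theta)\,\overline{\tilde g(E_\theta+\omega)}$ and denoting by $q(\Phi)=\int d\theta\,d\eta\,\overline{\Phi(\theta)}\Phi(\eta)\cosh^2\frac{\theta+\eta}{2}F_P(\theta-\eta)$ the unsmeared energy-density form, the Gram representation gives $\hscalar{\varphi}{T^{00}(g^2)\varphi}=\frac{\mu^2}{2\pi}\int\frac{d\omega}{2\pi}\,q(\Phi_\omega)$, together with the Plancherel identity $\int\frac{d\omega}{2\pi}\lVert\Phi_\omega\rVert^2=\lVert g\rVert_2^2\,\lVert\varphi\rVert^2$. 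For $\omega<-\mu$ the weight $\tilde g(E_\theta+\omega)$ concentrates $\Phi_\omega$ near the two rapidities $\pm\theta_\omega$ with $\cosh\theta_\omega=-\omega/\mu$ -- exactly the two-peak configuration of Proposition~\ref{prop:qeinogo}. Splitting $\Phi_\omega$ into its pieces near $+\theta_\omega$ and $-\theta_\omega$, I would bound the two diagonal blocks below by the free-field positivity (each $\sim\cosh^2\theta_\omega$) and the anti-diagonal block in modulus by $|F_P(2\theta_\omega)|\leq c\cosh 2\theta_\omega$; an arithmetic–geometric mean step then leaves the leading $e^{2\theta_\omega}$ contribution with coefficient $\frac14-\frac{c}{2}>0$, hence nonnegative. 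Feeding the residual lower-order terms through the Plancherel identity would produce a finite constant $c_g$.

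The hard part will be making Step~3 rigorous. The functions $\Phi_\omega$ are not sharply supported at $\pm\theta_\omega$; the dangerous contribution is the resonant anti-diagonal ridge (where $E_\theta=E_\eta$), which is thin but tall, so a crude cutoff into ``peak'' and ``tail'' loses the cancellation that the threshold $c<\frac12$ provides. Controlling this ridge, the non-concentrated regime $\omega\geq-\mu$, and the subleading terms of $\cosh^2\frac{\theta+\eta}{2}$ and of $F_P$ uniformly enough to obtain a single finite $c_g$ is where I expect the analytic hypotheses to be indispensable: the analyticity of $F_P$ in the strip $|\im\zeta|<\lambda_0$ together with the strict positivity of the one-particle energy $E_\theta\geq\mu>0$ should allow a contour deformation that turns the resonance into exponential suppression (the usual origin of QEI constants), thereby bounding the accumulated remainder and securing $\omega$-integrability. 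Establishing that deformation and the resulting estimates is the main obstacle.
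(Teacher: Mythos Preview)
Your general strategy---isolate a positive skeleton via the Gram representation of $\widetilde{g^2}$, split into the two rapidity branches, and exploit the strict margin $\tfrac12-c>0$---is correct, and in fact the paper begins the same way: it folds the integral onto $\theta,\eta>0$, obtaining a $2\times2$ matrix kernel with eigenvalues $h_\pm(\theta,\eta)=\cosh^2\tfrac{\theta+\eta}{2}\,F_P(\theta-\eta)\pm\cosh^2\tfrac{\theta-\eta}{2}\,F_P(\theta+\eta)$ in the even/odd channels $\varphi_\pm$. This is exactly your diagonal/anti-diagonal decomposition, and your heuristic eigenvalue count reproduces the paper's threshold. But from there the paper does \emph{not} try to bound the fibred form $q(\Phi_\omega)$ $\omega$-by-$\omega$, and no contour deformation is used. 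The pointwise form $q$ is unbounded below, and recovering an $\omega$-integrable deficit from Schwartz decay of $\tilde g$ alone runs straight into the resonant ridge you flag; there is no evident contour shift that tames it, and this is where your sketch has a genuine gap rather than merely missing details.

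The missing device is more elementary: construct a comparison kernel that is \emph{automatically} positive and matches $h_\pm$ on the diagonal. Set $k_\pm(\theta):=\sqrt{|h_\pm(\theta,\theta)|}=\sqrt{|\cosh^2\theta\pm F_P(2\theta)|}$, which is well-defined and smooth for $\theta\geq\theta_0$ precisely because $c<\tfrac12$, and let $Y_\varphi$ be the expression with $h_\pm(\theta,\eta)$ replaced by $k_\pm(\theta)k_\pm(\eta)$. Then $Y_\varphi\geq0$ by the same Gram identity you wrote down. The remainder kernel $\ell_\pm(\rho,\tau):=h_\pm(\rho+\tfrac\tau2,\rho-\tfrac\tau2)-k_\pm(\rho+\tfrac\tau2)k_\pm(\rho-\tfrac\tau2)$ vanishes at $\tau=0$ and is even in $\tau$, so a Taylor expansion gives $|\ell_\pm(\rho,\tau)|\leq a\,\tau^2\cosh^2\rho$; combined with $|\widetilde{g^2}(2\mu\sinh\rho\sinh\tfrac\tau2)|\lesssim(1+\tau^4\cosh^4\rho)^{-1}$ this makes the difference kernel square-integrable, hence Hilbert--Schmidt, hence $|X_\varphi-Y_\varphi|\leq c_g\|\varphi\|^2$. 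The analyticity in the strip $|\im\zeta|<\lambda_0$ enters only to convert the bound on $|F_P|$ into bounds on $F_P'$ and $F_P''$ via Cauchy's integral formula, which is what makes the Taylor constant $a$ finite---no actual contour shift is needed.
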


The idea of the proof is as follows. In \eqref{eq:t00f-expect}, we split the integration region in both $\theta$ and $\eta$ into the positive and negative half-axis. Setting $\varphiv(\theta):=(\varphi(\theta),\varphi(-\theta))^T$ for $\theta>0$, we can rewrite our expectation value as
\begin{equation}
 X_\varphi:=\hscalar{  \varphi }{ T^{00}(g^2) \varphi } = \frac{\mu^2}{2\pi} 
 \int_0^\infty d\theta \int_0^\infty d\eta\, 
 \widetilde{g^2}(\mu\cosh\theta-\mu\cosh\eta) 
 \varphiv(\theta)^\ast M(\theta,\eta) \varphiv(\eta),
\end{equation}
where the matrix $M(\theta,\eta)$ is given by
\begin{equation}
 M(\theta,\eta) = \begin{pmatrix}
                  \cosh^2\frac{\theta + \eta}{2} F_P(\theta -\eta) & 
                   \cosh^2\frac{\theta - \eta}{2} F_P(\theta +\eta) \\
                   \cosh^2\frac{\theta - \eta}{2} F_P(\theta +\eta) &
                  \cosh^2\frac{\theta + \eta}{2} F_P(\theta -\eta)                    
                  \end{pmatrix}.
\end{equation}
The eigenvectors of $M$ are $v_+=(1,1)^T$ and $v_-=(1,-1)^T$, independent of $\theta,\eta$, and the corresponding eigenvalues are
\begin{equation}
 h_\pm(\theta,\eta) = \cosh^2\frac{\theta + \eta}{2} F_P(\theta -\eta) \pm
                   \cosh^2\frac{\theta - \eta}{2} F_P(\theta +\eta).
\end{equation}
Denoting by $\varphi_\pm$ the components of $\varphiv$ in the direction of $v_\pm$, we thus have
\begin{equation}\label{eq:iexpr}
 X_\varphi = \frac{\mu^2}{2\pi} 
 \int_0^\infty d\theta \int_0^\infty d\eta\, 
 \widetilde{g^2}(\mu\cosh\theta - \mu\cosh\eta) 
 \sum_\pm
 \overline{\varphi_\pm(\theta)} h_\pm(\theta,\eta) \varphi_\pm(\eta) .
\end{equation}
We will compare $X_\varphi$ to the following related integral expression:
\begin{equation}\label{eq:jexpr}
 Y_\varphi := \frac{\mu^2}{2\pi} 
 \int_0^\infty d\theta \int_0^\infty d\eta\, 
 \widetilde{g^2}(\mu\cosh\theta-\mu\cosh\eta) 
 \sum_\pm \overline{ \varphi_\pm(\theta)}  k_\pm(\theta)k_\pm(\eta)\varphi_\pm(\eta),
\end{equation}
where
\begin{equation}\label{eq:kpmdef}
 k_\pm(\theta) := \sqrt{| h_{\pm}(\theta,\theta)  |}
= \sqrt{| \cosh^2\theta \pm F_P(2\theta)  |}.
\end{equation}
Specifically, we will show that $Y_\varphi \geq 0$ and that $|X_\varphi-Y_\varphi|$ is bounded in $\|\varphi\|^2$. We do this in several steps; the hypothesis of the theorem is always assumed.

\begin{lemma}\label{lem:intpos}
For any $\varphi$, we have $Y_\varphi \geq 0$.
\end{lemma}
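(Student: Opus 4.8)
The plan is to exploit that $g^2$ is a nonnegative function, so that $\widetilde{g^2}$ is of positive type; concretely, I would write $\widetilde{g^2}(s) = \int dt\, g^2(t)\, e^{its}$ and insert this into \eqref{eq:jexpr}. The key observation is that the resulting exponential factor splits as $e^{it\mu\cosh\theta}\,e^{-it\mu\cosh\eta}$, which is precisely the structure needed to turn the $(\theta,\eta)$ double integral into a perfect square for each fixed $t$. This is really a Bochner-type positivity argument, and there is no deep step involved; the content is entirely in arranging the factors correctly.

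First, for each fixed $t\in\rbb$ and each sign $\pm$, I would define the single-variable function
\begin{equation}
 \Phi_{\pm,t}(\theta) := k_\pm(\theta)\,\varphi_\pm(\theta)\,e^{-it\mu\cosh\theta}, \qquad \theta > 0.
\end{equation}
Since $k_\pm(\theta) = \sqrt{\lvert \cosh^2\theta \pm F_P(2\theta)\rvert}$ is real and nonnegative, complex conjugation gives $\overline{\Phi_{\pm,t}(\theta)} = k_\pm(\theta)\,\overline{\varphi_\pm(\theta)}\,e^{it\mu\cosh\theta}$. Hence the $(\theta,\eta)$ integrand of \eqref{eq:jexpr}, once the factor $e^{it\mu\cosh\theta}e^{-it\mu\cosh\eta}$ coming from $\widetilde{g^2}$ is absorbed into the $\theta$- and $\eta$-dependent pieces, becomes exactly $\overline{\Phi_{\pm,t}(\theta)}\,\Phi_{\pm,t}(\eta)$. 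After interchanging the $t$-integration with the $(\theta,\eta)$-integration, the inner double integral factorizes and I obtain
\begin{equation}
 Y_\varphi = \frac{\mu^2}{2\pi} \sum_\pm \int dt\, g^2(t) \Big\lvert \int_0^\infty \Phi_{\pm,t}(\theta)\, d\theta \Big\rvert^2 .
\end{equation}
Every term in the integrand is the product of $g^2(t)\geq 0$ with a modulus squared, hence nonnegative, and $Y_\varphi \geq 0$ follows at once.

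The only point requiring care — and the main (mild) obstacle — is justifying the Fubini interchange. Here I would use that $\varphi\in\dcal(\rbb)$ has compact support, so that $\varphi_\pm$ (being the components of $\varphiv$ along $v_\pm$, built from $\varphi(\theta)$ and $\varphi(-\theta)$) are supported in a fixed bounded subinterval of $(0,\infty)$; on that set $k_\pm$ is bounded, so $\Phi_{\pm,t}$ is a bounded, compactly supported, hence integrable function of $\theta$, with bounds uniform in $t$. Combined with $g^2\in L^1(\rbb)$, this makes the triple integral absolutely convergent, so Fubini applies and all the manipulations above are legitimate. I note that no growth hypothesis on $F_P$ is needed for this lemma; the bound \eqref{eq:fpbounds} will only enter in the subsequent estimate of $\lvert X_\varphi - Y_\varphi\rvert$.
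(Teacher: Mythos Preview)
Your proof is correct and rests on the same idea as the paper's: both exploit that $\widetilde{g^2}$ is positive-definite so that the double integral in \eqref{eq:jexpr} factorizes into a modulus squared. The only difference is the representation used---the paper invokes the convolution identity $\widetilde{g^2}(p-p') = \int \frac{dq}{2\pi}\,\tilde g(q+p)\,\overline{\tilde g(q+p')}$ and integrates over the Fourier variable $q$, whereas you write $\widetilde{g^2}(s)=\int dt\,g^2(t)\,e^{its}$ directly and integrate over $t$; these are Parseval-dual versions of the same manipulation, and your version is arguably the more elementary since it needs only the definition of the Fourier transform and $g^2\geq 0$. One small imprecision: $\varphi_\pm$ is supported in a compact subset of $[0,\infty)$, not necessarily of $(0,\infty)$, but since $k_\pm$ is continuous at $0$ this does not affect your Fubini justification.
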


\begin{proof}
Using the identity 
\begin{equation}\label{tildefidentity}
\widetilde{g^2}(p-p')
 =\int \frac{dq}{2\pi}\,\tilde{g}(q +p)\overline{\tilde{g}(q + p')},
\end{equation}
we can rewrite the integral $Y_\varphi$ as
\begin{equation}
 Y_\varphi = \frac{\mu^2}{4\pi^2} 
 \sum_\pm \int dq \,
 \Big\lvert \int_0^\infty d\eta \, 
  a_\pm(\eta,q) \Big\rvert^2
 \quad\text{with} \quad
 a_\pm(\eta,q) := k_\pm(\eta)\varphi_\pm(\eta) \overline{\tilde{g}(q+\mu\cosh\eta)} .
\end{equation}
But this is clearly nonnegative.
\end{proof}

For estimating $|X_\varphi-Y_\varphi|$, we first need an estimate for the relevant integral kernels,
into which the growth bound \eqref{eq:fpbounds} for $F_P$ will crucially enter.

\begin{lemma}\label{lem:kernelest}
  Set
  \begin{equation}
 \ell_\pm(\rho,\tau) := h_\pm(\rho+\tfrac{\tau}{2}, \rho-\tfrac{\tau}{2}) - k_\pm(\rho+\tfrac{\tau}{2}) k_\pm(\rho-\tfrac{\tau}{2}).
\end{equation}
Then, there exists $a>0$ such that for all $\rho \geq \theta_0+1$ and $\tau \in [-1,1]$,
\begin{equation}
 |\ell_\pm(\rho,\tau)| \leq a \tau^2 \cosh^2 \rho.
\end{equation}
\end{lemma}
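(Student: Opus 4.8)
The plan is to exploit that $\ell_\pm(\rho,\tau)$ vanishes to second order at $\tau=0$ and then to bound its second $\tau$-derivative by $O(\cosh^2\rho)$, uniformly in the stated range. First I would record the explicit form after the substitution $\theta=\rho+\tfrac{\tau}{2}$, $\eta=\rho-\tfrac{\tau}{2}$ (so $\theta+\eta=2\rho$, $\theta-\eta=\tau$):
\begin{equation}
 h_\pm(\rho+\tfrac{\tau}{2},\rho-\tfrac{\tau}{2}) = \cosh^2\rho\, F_P(\tau) \pm \cosh^2\tfrac{\tau}{2}\, F_P(2\rho).
\end{equation}
Since $F_P$ is even on the real line, both this expression and the product $k_\pm(\rho+\tfrac{\tau}{2})k_\pm(\rho-\tfrac{\tau}{2})$ are invariant under $\tau\mapsto-\tau$, so $\ell_\pm(\rho,\cdotarg)$ is even and $\partial_\tau\ell_\pm(\rho,0)=0$ for free. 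Moreover, for $\rho\geq\theta_0+1$ and $|\tau|\leq1$ each of the arguments $2\rho$, $2\rho\pm\tau$ exceeds $\theta_0$, so \eqref{eq:fpbounds} gives $|F_P(2\rho)|\leq c\cosh 2\rho\leq 2c\cosh^2\rho$; because $c<\tfrac12$, the radicand $\cosh^2\theta\pm F_P(2\theta)$ of $k_\pm$ is bounded below by $(1-2c)\cosh^2\theta>0$. Hence $k_\pm(\theta)^2=\cosh^2\theta\pm F_P(2\theta)$ (with no absolute value), $k_\pm$ is smooth on the relevant range, and $\ell_\pm(\rho,0)=h_\pm(\rho,\rho)-k_\pm(\rho)^2=0$. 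By Taylor's theorem with Lagrange remainder it then suffices to prove a uniform bound $|\partial_\tau^2\ell_\pm(\rho,\tau)|\leq 2a\cosh^2\rho$.

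The contribution of $h_\pm$ to $\partial_\tau^2\ell_\pm$ is immediate: differentiating the display above twice in $\tau$ yields $\cosh^2\rho\,F_P''(\tau)\pm\tfrac12\cosh\tau\,F_P(2\rho)$, which is $O(\cosh^2\rho)$ since $F_P''$ is bounded on $[-1,1]$ (it is analytic in a neighbourhood of the real axis) and $|F_P(2\rho)|\leq 2c\cosh^2\rho$.

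The main work, and the step I expect to be the main obstacle, is the term $k_\pm(\rho+\tfrac{\tau}{2})k_\pm(\rho-\tfrac{\tau}{2})$, whose second $\tau$-derivative is a fixed linear combination of $k_\pm,k_\pm',k_\pm''$ evaluated at $\rho\pm\tfrac{\tau}{2}$. Writing $r_\pm(\theta)=\cosh^2\theta\pm F_P(2\theta)$ and $k_\pm=\sqrt{r_\pm}$, I would establish $k_\pm,k_\pm',k_\pm''=O(\cosh\theta)$ on $\theta\geq\theta_0+\tfrac12$. The lower bound $r_\pm\geq(1-2c)\cosh^2\theta$ is exactly what tames the denominators in $k_\pm'=r_\pm'/(2\sqrt{r_\pm})$ and $k_\pm''=r_\pm''/(2\sqrt{r_\pm})-(r_\pm')^2/(4r_\pm^{3/2})$, so the estimates reduce to bounding $r_\pm,r_\pm',r_\pm''$ by $O(\cosh^2\theta)$. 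For the $\cosh^2\theta$ summand and its derivatives this is clear; the only nonroutine input is controlling $F_P'(2\theta)$ and $F_P''(2\theta)$, which the hypothesis does not bound directly. Here I would use that \eqref{eq:fpbounds} bounds $F_P$ on the full horizontal strip $|\im\zeta|<\lambda_0$: a Cauchy estimate on a circle of fixed radius $r<\min(\lambda_0,1)$ about $2\theta$ (which stays inside $|\re\zeta|\geq\theta_0$, $|\im\zeta|<\lambda_0$) gives $|F_P^{(n)}(2\theta)|\leq C_n\cosh 2\theta=O(\cosh^2\theta)$ for $n=1,2$.

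Combining these, each of $k_\pm,k_\pm',k_\pm''$ is $O(\cosh\theta)$, and since $\cosh(\rho\pm\tfrac{\tau}{2})\leq e^{1/2}\cosh\rho$ for $|\tau|\leq1$, every product entering $\partial_\tau^2[k_\pm(\rho+\tfrac{\tau}{2})k_\pm(\rho-\tfrac{\tau}{2})]$ is $O(\cosh^2\rho)$. Together with the $h_\pm$-part this yields $|\partial_\tau^2\ell_\pm(\rho,\tau)|=O(\cosh^2\rho)$ uniformly, and the Taylor estimate $|\ell_\pm(\rho,\tau)|=\tfrac12\tau^2\,|\partial_\tau^2\ell_\pm(\rho,\xi)|$ finishes the proof with a suitable constant $a$. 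The crux throughout is that $c<\tfrac12$ forces the square-root radicand to be comparable to $\cosh^2\theta$, keeping the derivatives of $k_\pm$ from blowing up, while the strip bound in \eqref{eq:fpbounds} supplies the derivative estimates for $F_P$ via Cauchy's formula.
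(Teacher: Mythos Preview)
Your proposal is correct and follows essentially the same route as the paper: vanishing of $\ell_\pm$ and its first $\tau$-derivative at $\tau=0$, Taylor expansion with remainder, Cauchy estimates on the strip $|\im\zeta|<\lambda_0$ to control $F_P'$ and $F_P''$, and the crucial use of $c<\tfrac12$ to bound the radicand of $k_\pm$ below by a multiple of $\cosh^2\theta$. Your handling of $\partial_\tau^2\big[k_\pm(\rho+\tfrac{\tau}{2})k_\pm(\rho-\tfrac{\tau}{2})\big]$ is in fact slightly more careful than the paper's, since you correctly include the cross term involving $k_\pm'$ (which the paper's displayed product-rule formula omits, though the omission is harmless for the estimate).
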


\begin{proof}
One notes that $\ell_\pm(\rho,0)=0$ and that $\ell_\pm(\rho,\tau)$ is symmetric in $\tau$. A Taylor expansion of $\ell_\pm(\rho,\tau)$ in $\tau$ around $\tau = 0$ then yields that
\begin{equation}\label{eq:taylorest}
 |\ell_\pm(\rho,\tau)| \leq \frac{\tau^2}{2} \sup_{|\xi|\leq 1} \Big|\frac{\partial^2 \ell_\pm}{\partial \tau^2 }(\rho,\xi)\Big|.
\end{equation}
Thus our task is to estimate the derivative. 
As a first step, we remark that Cauchy's formula allows us to deduce estimates for the derivatives of $F_P$ from \eqref{eq:fpbounds}: One finds constants $c',c''>0$ such that
\begin{equation}\label{eq:fpderivbounds}
  \forall \theta \geq \theta_0+1: \quad
   \Big|\frac{dF_P}{d\theta}(\theta) \Big| \leq c' \cosh \theta, \;\;
   \Big|\frac{d^2 F_P}{d\theta^2}(\theta) \Big| \leq c'' \cosh \theta.
\end{equation}
Now we explicitly compute
\begin{equation}
\begin{aligned}
  \frac{\partial^2}{\partial \tau^2} h_\pm(\rho+\tfrac{\tau}{2}, \rho-\tfrac{\tau}{2})
  &= \frac{\partial^2}{\partial \tau^2} \Big(\cosh^2 \rho \, F_P(\tau) \pm \cosh^2 \tfrac{\tau}{2} F_P(2\rho) \Big)\\
  &= \cosh^2 \rho\, \frac{d^2 F_P}{d\theta^2}(\tau) \pm \frac{1}{2} \cosh\tau \, F_P(2\rho).
  \end{aligned}
\end{equation}
In the region $|\tau| \leq 1$, $\rho \geq \theta_0+1$, we therefore have due to \eqref{eq:fpbounds}, \eqref{eq:fpderivbounds},
\begin{equation}\label{eq:hppest}
   \Big\lvert \frac{\partial^2}{\partial \tau^2} h_\pm (\rho+\tfrac{\tau}{2}, \rho-\tfrac{\tau}{2}) \Big\rvert \leq a_1 \cosh^2 \rho
\end{equation}
with some $a_1>0$. 
For the derivative of the second term in $\ell_\pm$, we obtain
\begin{equation}\label{eq:Kderiv}
 \frac{\partial^2}{\partial \tau^2} k_\pm(\rho+\tfrac{\tau}{2}) k_\pm(\rho-\tfrac{\tau}{2})
 = \frac{1}{4} \frac{d^2 k_\pm}{d\theta}(\rho+\tfrac{\tau}{2}) k_\pm(\rho-\tfrac{\tau}{2}) 
 + \frac{1}{4} k_\pm(\rho+\tfrac{\tau}{2}) \frac{d^2 k_\pm}{d\theta}(\rho-\tfrac{\tau}{2}) .
\end{equation}
Note here that the radicand in $k_\pm(\theta)$ -- cf.~\eqref{eq:kpmdef} -- is actually positive for $\theta\geq \theta_0$ due to \eqref{eq:fpbounds}, thus the function is differentiable. ($c<\frac{1}{2}$ enters here.) 
For $\theta \geq \theta_0$, we then find $|k_\pm(\theta)| \leq a_2 \cosh \theta $. For its second derivative, we have
\begin{equation}
\begin{aligned}
   \frac{d^2 k_\pm}{d\theta^2}(\theta) 
    &= \frac{d^2}{d\theta^2}
    \sqrt{\cosh^2 \theta \pm F_P(2\theta)}  \\
    &= - \frac{(\frac{1}{2} \sinh(2\theta) \pm \frac{dF_P}{d\theta}(2\theta))^2}{ (\cosh^2\theta \pm F_P(2\theta))^{3/2}}
    +  \frac{\cosh(2\theta) \pm 2 \frac{d^2 F_P}{d\theta^2}(2\theta)}{ (\cosh^2\theta \pm F_P(2\theta))^{1/2}} .
    \end{aligned}
\end{equation}
In the denominator, we can estimate $\cosh^2\theta \pm F_P(2\theta) \geq a_3 \cosh^2 \theta$ (for $\theta\geq \theta_0$, with $a_3>0$); again, $c < \frac{1}{2}$ enters. In the numerator, we estimate the derivatives of $F_P$ by \eqref{eq:fpderivbounds}. This yields
\begin{equation}\label{eq:kppest}
      \Big\lvert \frac{d^2 k_\pm}{d \theta^2}  \Big\rvert  \leq a_4 \cosh \theta.
\end{equation}
Combining \eqref{eq:hppest}, \eqref{eq:Kderiv} and \eqref{eq:kppest}, we find for $\rho \geq \theta_0+1$, $|\tau| \leq 1$,
\begin{equation}\label{eq:deriv2est}
 \Big\lvert \frac{\partial^2 \ell_\pm}{\partial \tau^2}(\rho,\tau)\Big\rvert
 \leq a_5 \cosh^2\rho
\end{equation}
with some $a_5>0$. Applying this estimate in \eqref{eq:taylorest} yields the desired result.
\end{proof}

This estimate on the integral kernel allows us to find bounds for $|X_\varphi-Y_\varphi|$.

\begin{lemma}\label{lem:inths}
For every $g \in \scal_\rbb(\rbb)$, there exists $c_g>0$ such that
$|X_\varphi-Y_\varphi| \leq c_g \|\varphi\|^2$ for any $\varphi\in\dcal(\rbb)$.
\end{lemma}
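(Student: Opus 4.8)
The plan is to realize $X_\varphi - Y_\varphi$ as a sum over the two sectors $\pm$ of a quadratic form of an integral operator on $L^2(0,\infty)$, and to bound that operator uniformly in $\varphi$ by Schur's test. Subtracting \eqref{eq:jexpr} from \eqref{eq:iexpr} and recalling the definition of $\ell_\pm$ in Lemma~\ref{lem:kernelest}, one has
\[
 X_\varphi - Y_\varphi = \frac{\mu^2}{2\pi}\sum_\pm \int_0^\infty\!\!\int_0^\infty \overline{\varphi_\pm(\theta)}\, K_\pm(\theta,\eta)\,\varphi_\pm(\eta)\,d\theta\,d\eta,
 \qquad
 K_\pm(\theta,\eta) = \widetilde{g^2}(\mu\cosh\theta-\mu\cosh\eta)\,\ell_\pm\big(\tfrac{\theta+\eta}{2},\theta-\eta\big).
\]
Since the components $\varphi_\pm$ arise from an orthogonal change of basis on the pair $(\varphi(\theta),\varphi(-\theta))$, their norms satisfy $\|\varphi_\pm\|_{L^2(0,\infty)}^2\leq\|\varphi\|^2$, so it suffices to show that the integral operator $\mathcal K_\pm$ with kernel $K_\pm$ is bounded on $L^2(0,\infty)$ with a norm depending only on $g$ and $F_P$. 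As $g^2$ is real, $|\widetilde{g^2}|$ is even, and since $F_P$ is even the kernel modulus $|K_\pm|$ is symmetric in $(\theta,\eta)$; Schur's test then reduces the claim to the single bound $\sup_{\theta\geq 0}\int_0^\infty |K_\pm(\theta,\eta)|\,d\eta < \infty$.

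I would estimate this Schur integral by partitioning the $\eta$-axis via $\tau:=\theta-\eta$ and $\rho:=\tfrac{\theta+\eta}{2}$. On the near-diagonal region $|\tau|\leq 1$, $\rho\geq\theta_0+1$, Lemma~\ref{lem:kernelest} gives $|\ell_\pm|\leq a\tau^2\cosh^2\rho$; combining this with the Schwartz bound $|\widetilde{g^2}(p)|\leq C_N(1+|p|)^{-N}$ and the identity $\cosh\theta-\cosh\eta=2\sinh\rho\sinh\tfrac\tau2$ (so that $|\widetilde{g^2}|\leq C_N(1+\mu\sinh\rho\,|\tau|)^{-N}$ there), the substitution $s=\sinh\rho\,\tau$ turns this contribution into one bounded by $\cosh^2\rho/\sinh^3\rho$ for $N\geq 4$, which is uniformly bounded for $\rho\geq\theta_0+1$. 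The complementary part of the strip $|\tau|\leq1$, namely $\rho<\theta_0+1$, lies in a compact set of $(\theta,\eta)$ on which $h_\pm$, $k_\pm k_\pm$ and $\widetilde{g^2}$ are continuous and bounded, contributing a bounded integral over a bounded $\eta$-range (and being empty for large $\theta$).

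For the off-diagonal region $|\tau|>1$ I would use only a crude global growth bound: from \eqref{eq:fpbounds} together with continuity of $F_P$ one gets $|F_P(\theta)|\leq C_F\cosh\theta$ for all real $\theta$, whence $|\ell_\pm(\tfrac{\theta+\eta}2,\theta-\eta)|\leq C\,e^{2\max(\theta,\eta)}$; here the rapid decay of $\widetilde{g^2}$ does all the work. Passing to $u=\cosh\eta$ and using $|\cosh\theta-\cosh\eta|\geq 2\sinh\tfrac12\,\sinh\rho$ on this region, one checks that $\int e^{2\max(\theta,\eta)}\big(1+\mu|\cosh\theta-\cosh\eta|\big)^{-N}\,d\eta$ is bounded uniformly in $\theta$ once $N$ is large enough (say $N\geq4$), the $N$-th power decay in the denominator dominating the exponential growth of the kernel. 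Collecting the three regions yields the Schur bound, hence $\|\mathcal K_\pm\|\leq c_g'$ and $|X_\varphi-Y_\varphi|\leq c_g\|\varphi\|^2$.

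The main obstacle is the near-diagonal region: there $h_\pm$ and $k_\pm k_\pm$ each grow like $\cosh^2\rho$, so the estimate hinges entirely on Lemma~\ref{lem:kernelest} supplying the extra factor $\tau^2$, which must be balanced against the effective width $\sim 1/\sinh\rho$ of the peak of $\widetilde{g^2}$. Keeping track of these competing powers -- the $\tau^2$, the $\cosh^2\rho$, and the $N$-th order decay -- is the delicate step, whereas the off-diagonal and compact contributions are comparatively routine.
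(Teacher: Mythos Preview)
Your argument is correct and follows the same regional decomposition as the paper (compact part, off-diagonal $|\tau|>1$, near-diagonal $|\tau|\leq 1$ with $\rho$ large), with Lemma~\ref{lem:kernelest} doing the essential work in the near-diagonal region. The difference is in how you pass from kernel estimates to an operator bound: you use Schur's test, i.e., you show $\sup_{\theta\geq 0}\int_0^\infty |K_\pm(\theta,\eta)|\,d\eta<\infty$, whereas the paper shows directly that the kernel lies in $L^2\big((0,\infty)^2\big)$, so that $\mathcal K_\pm$ is Hilbert--Schmidt and $c_g$ can be taken as (a multiple of) $\|K_\pm\|_{L^2}$ via Cauchy--Schwarz. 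Both routes use the same $\tau^2\cosh^2\rho$ bound against the decay of $\widetilde{g^2}$, and both reduce the near-diagonal piece to a one-variable integral after the substitution $\tau\cosh\rho\mapsto$ new variable. Your Schur estimate is slightly more elementary; the paper's $L^2$ estimate buys a bit more, namely that $X_\varphi-Y_\varphi$ is the quadratic form of a Hilbert--Schmidt (hence compact) operator, a fact the paper notes after the proof of Theorem~\ref{thm:qei}. One small remark: in your off-diagonal sketch the lower bound $|\cosh\theta-\cosh\eta|\geq 2\sinh\tfrac12\,\sinh\rho$ only yields $\sinh\rho\gtrsim e^{\max(\theta,\eta)/2}$, so the stated threshold ``$N\geq 4$'' is borderline; either take $N\geq 5$ there, or (as you also indicate) pass to $u=\cosh\eta$ to get the sharper decay.
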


\begin{proof}
Comparing \eqref{eq:iexpr} and \eqref{eq:jexpr} and using Cauchy-Schwarz, we can choose
\begin{equation}\label{eq:toconverge1}
 c_g := \sum_\pm \int_0^\infty d\theta \int_0^\infty d\eta\, 
 \big\lvert \widetilde{g^2}(\mu\cosh\theta-\mu\cosh\eta) \big\rvert^2 \,
 \big\lvert h_\pm(\theta,\eta) - k_\pm(\theta) k_\pm(\eta)  \big\rvert^2
\end{equation}
if we can show that these integrals converge. We change to new variables: $\rho = \frac{\theta+\eta}{2}$, $\tau = \theta-\eta$,  $|\partial (\rho,\tau)/\partial(\theta,\eta)|=1$. The integral now runs over $0 < \rho < \infty$, $-2\rho < \tau < 2\rho$. However, the integral over the region $0 < \rho < \theta_0+1$ certainly exists since the integrand is bounded there. Also the integral over $\rho > \theta_0+1$, $1 < |\tau| < 2 \rho$ is finite, since $h_\pm, k_\pm$ are bounded by a polynomial in $\cosh \rho$, whereas \begin{equation}\label{eq:g2}
 \big\lvert \widetilde{g^2}( \mu\cosh\theta-\mu\cosh\eta) \big\rvert 
 =  \big\lvert \widetilde{g^2}(2 \mu \sinh \rho \sinh  \tfrac{\tau}{2}) \big\rvert
\end{equation}
decays faster then any power of $\cosh\rho$ there, as $g$ is of Schwartz class.
Thus it remains to prove convergence of
\begin{equation}\label{eq:toconverge2} 
 \int_{\theta_0+1}^\infty d\rho \int_{-1}^1 d\tau\, 
 \big\lvert \widetilde{g^2}(2 \mu \sinh \rho \sinh  \tfrac{\tau}{2}) \big\rvert^2 \,
 \big\lvert \ell_\pm(\rho,\tau) \big\rvert^2 
\end{equation}
with $\ell_\pm$ as in Lemma~\ref{lem:kernelest}. We estimate \eqref{eq:g2} more carefully for $\rho > \theta_0+1$, $ |\tau| < 1$:
\begin{equation}\label{eq:g2upper}
 \big\lvert \widetilde{g^2}( \mu\cosh\theta-\mu\cosh\eta) \big\rvert 
  \leq \frac{b}{\tau^4 \cosh^4 \rho + 1}
\end{equation}
with some $b>0$. Using this and the estimates from Lemma~\ref{lem:kernelest},  
we have 
\begin{equation}
 \eqref{eq:toconverge2}  \leq a^2 b^2 \int_{\theta_0+1}^\infty d\rho \int_{-1}^1 d\tau 
   \frac{\tau^4 \cosh^4\rho }{(\tau^4 \cosh^4 \rho + 1)^2} 
   \leq a^2 b^2 \int_{\theta_0+1}^\infty \frac{d\rho}{\cosh\rho} \int_{-\infty}^\infty d\kappa 
   \frac{\kappa^4 }{(\kappa^4 + 1)^2} < \infty.
\end{equation}
(We have used the substitution $\kappa = \tau \cosh \rho$.) This means that $c_g$ in \eqref{eq:toconverge1} is finite.
\end{proof}

Our main result is now a direct consequence of the lemmas above.

\begin{proof}[Proof of Theorem~\ref{thm:qei}]
Using Lemmas~\ref{lem:intpos} and \ref{lem:inths}, we can compute
\begin{equation}
   \hscalar{\varphi}{T^{00}(g^2)\,\varphi} = Y_\varphi + (X_\varphi-Y_\varphi) \geq Y_\varphi - |X_\varphi-Y_\varphi| \geq -c_g \|\varphi\|^2. \qquad \qed
\end{equation}
\noqed
\end{proof}

Note that we have actually shown that $T^{00}(g^2)$ is a sum of a positive quadratic form and a Hilbert-Schmidt operator, since we estimated $|X_\varphi-Y_\varphi|$ by showing that the integral kernel involved is an $L^2$ function. But we will not need this detail for our present purposes.

\section{Uniqueness of the energy density in certain models}\label{sec:models}

Our results so far indicate that the existence or non-existence of quantum energy inequalities at one-particle level is determined by the asymptotic behaviour of the function $F_P(\theta)=P(\cosh\theta) \Fmin(\theta+i\pi)$. Essentially, Theorem~\ref{thm:qei} says that QEIs hold if $F_P(\theta) \lesssim \frac{1}{2} \cosh \theta$, and Proposition \ref{prop:qeinogo} shows that QEIs do \emph{not} hold if $F_P(\theta)  \gtrsim \frac{1}{2} \cosh \theta$. Let us now see what this means in specific models.

Let us first consider the Ising model, with the function $\Fmin$ given by
\begin{equation}
  \Fmin(\zeta) = -i \sinh \frac{\zeta}{2}.
\end{equation}
If now $P=1$, then $F_P(\theta) \sim \cosh^{1/2} \theta$ for large $\theta$ (also on a small complex strip), the conditions of Theorem~\ref{thm:qei} are fulfilled, and the energy density satisfies a QEI. If, on the other hand, $P$ is a polynomial of degree $d \geq 1$, then $F_P(\theta) \sim \cosh^{d+1/2} \theta$, the conditions of Proposition \ref{prop:qeinogo} are met, and hence no QEI can hold. 
In other words, in the Ising model, the one-particle energy density is \emph{uniquely determined} by Proposition \ref{prop:stressenergy} and the additional requirement that a QEI holds. We summarize this in the following theorem.

\begin{theorem}\label{thm:uniqueness-ising}
In the Ising model (cf.~Table~\ref{tab:models}), a one-particle quantum energy inequality holds if, and only if, $P=1$ in Proposition \ref{prop:stressenergy}. \hfill \qed
\end{theorem}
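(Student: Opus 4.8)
The plan is to reduce Theorem~\ref{thm:uniqueness-ising} to a direct application of the two preceding results, Theorem~\ref{thm:qei} and Proposition~\ref{prop:qeinogo}, by analysing the asymptotic growth of $F_P(\theta) = P(\cosh\theta)\,\Fmin(\theta+i\pi)$ in the specific case of the Ising model. First I would recall that in the Ising model $\Fmin(\zeta) = -i\sinh(\zeta/2)$, so that on the shifted line relevant to $F_P$ we have $\Fmin(\theta+i\pi) = -i\sinh\bigl(\tfrac{\theta+i\pi}{2}\bigr)$. A short computation using $\sinh(\tfrac{\theta}{2}+\tfrac{i\pi}{2}) = i\cosh\tfrac{\theta}{2}$ shows that $\Fmin(\theta+i\pi) = \cosh\tfrac{\theta}{2}$, which is real and positive (consistent with the general remark after Lemma~\ref{lem:fminunique} that $\Fmin$ is real on $\rbb+i\pi$). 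Hence $|\Fmin(\theta+i\pi)|$ grows like $\cosh^{1/2}\theta$ for large $|\theta|$.

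Next I would split into the two cases according to the degree $d$ of $P$. If $P=1$ (degree $0$), then $F_P(\theta) = \cosh\tfrac{\theta}{2} \sim \tfrac{1}{2}\cosh^{1/2}\theta$ asymptotically, which is eventually dominated by $c\cosh\theta$ for any fixed $c>0$; in particular one can choose some $c<\tfrac12$ and a threshold $\theta_0$ so that the bound \eqref{eq:fpbounds} of Theorem~\ref{thm:qei} holds. To verify \eqref{eq:fpbounds} I need the estimate not just on the real line but on a small complex strip $|\im\zeta| < \lambda_0$; this follows because $\Fmin(\zeta+i\pi) = \cosh\tfrac{\zeta}{2}$ extends holomorphically and $|\cosh\tfrac{\zeta}{2}|$ is still comparable to $\cosh\tfrac12\re\zeta$ for $\im\zeta$ in a bounded strip. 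Thus Theorem~\ref{thm:qei} applies and a one-particle QEI holds.

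Conversely, if $P$ has degree $d\geq 1$, then $P(\cosh\theta)$ grows like $\cosh^d\theta$, so $F_P(\theta) \sim (\text{const})\cosh^{d+1/2}\theta$, which eventually exceeds $c\cosh\theta$ for \emph{every} $c$, in particular for some $c>\tfrac12$. Here I must also check the sign: since $P$ has real coefficients and $P(1)=1$, and since $\cosh\tfrac\theta2 > 0$, the leading coefficient of $P$ controls the sign of $F_P(\theta)$ for large $\theta$; I would argue that a QEI forces this leading behaviour to be positive (otherwise $F_P\to+\infty$ fails and one should instead apply the argument to $-F_P$ or re-examine the hypothesis of Proposition~\ref{prop:qeinogo}), so that \eqref{eq:fmingrowfast} holds with some $c>\tfrac12$ and suitable $\theta_0$. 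Proposition~\ref{prop:qeinogo} then yields a sequence of normalized one-particle states on which $\langle\varphi_j, T^{00}(g^2)\varphi_j\rangle \to -\infty$, so no QEI of the form \eqref{eq:qitoshow} can hold.

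The main obstacle I anticipate is the sign bookkeeping in the degree-$d\geq1$ case: Proposition~\ref{prop:qeinogo} requires $F_P(\theta)\geq c\cosh\theta$ with $c>\tfrac12$ for large $\theta$, i.e.\ a genuine positive lower bound, whereas a priori the leading coefficient of the real polynomial $P$ could be negative, making $F_P\to-\infty$. I would handle this by noting that the argument leading to negative energy density is insensitive to the global sign in the sense that what drives $\lambda_j\to-\infty$ is the dominance of the off-diagonal term $h_{j,12}$, which depends on $F_P(2j+\theta+\eta)$; one needs its magnitude to outgrow $\cosh^2$ of the large shift. A clean way to close this is to observe that if the leading coefficient of $P$ were negative the analogous construction with the roles adjusted still produces unbounded-below expectation values, since $|F_P(\theta)|\sim(\text{const})\cosh^{d+1/2}\theta$ regardless of sign; thus in either sign case the QEI fails. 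Everything else is routine asymptotic estimation, and the two halves combine to give the stated equivalence: a one-particle QEI holds if and only if $P=1$.
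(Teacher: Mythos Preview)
Your approach is essentially identical to the paper's: the argument preceding the theorem statement simply notes that for the Ising model $F_P(\theta)\sim\cosh^{1/2}\theta$ when $P=1$ (so Theorem~\ref{thm:qei} applies) and $F_P(\theta)\sim\cosh^{d+1/2}\theta$ when $\deg P=d\geq1$ (so Proposition~\ref{prop:qeinogo} applies), and concludes.

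You are in fact more careful than the paper on one point. The paper writes ``$F_P(\theta)\sim\cosh^{d+1/2}\theta$'' and asserts that Proposition~\ref{prop:qeinogo} applies, tacitly assuming the leading coefficient of $P$ is positive; you correctly flag that a negative leading coefficient gives $F_P\to-\infty$, so the hypothesis $F_P(\theta)\geq c\cosh\theta$ fails as stated. Your proposed fix is the right one and can be made precise: in the proof of Proposition~\ref{prop:qeinogo} the matrix $M_j$ has \emph{two} eigenvalues, $M_{j,11}\pm M_{j,12}$, and since $M_{j,11}\approx\widetilde{g^2}(0)\cosh^2 j$ while $M_{j,12}\approx\widetilde{g^2}(0)\,F_P(2j)$ with $|F_P(2j)|\gg\cosh^2 j$ for $d\geq1$, one of the two eigenvalues necessarily tends to $-\infty$ regardless of the sign of $F_P(2j)$. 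So the conclusion stands, and your sketch closes a small gap the paper leaves implicit.
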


We remark that in the case where no QEI holds, Proposition~\ref{prop:qeinogo} actually implies that $T^{00}(g^2)$ is unbounded below for \emph{any} real-valued Schwartz function $g\not\equiv 0$ -- we cannot even establish a QEI by restricting to a smaller class of smearing functions.

In the sinh-Gordon model, the situation is more complicated. Here the function $\Fmin$ is of the form
\begin{equation}
  \Fmin(\zeta) =  \exp J_{B}(\zeta).
\end{equation}
Since $J_B(\theta+i\lambda)$ converges to a constant as $\theta\to\pm\infty$, uniformly on a strip around $\lambda=\pi$, we know that $|\Fmin(\theta+i\lambda)|$ is bounded there and does not converge to 0 at infinity. 
If now $P=1$, then $F_P$ is bounded on the strip, thus Theorem~\ref{thm:qei} applies and yields a QEI. If $\deg P \geq 2$, then $\lvert F_P (\theta) \rvert \sim (\cosh \theta)^{\deg P}$ grows faster than $c \cosh\theta$ for any $c$, and by Proposition \ref{prop:qeinogo} no QEI can hold. If however  $\deg P=1$, then details of the polynomial matter. Since $P(1)=1$, such $P$ must be of the form
\begin{equation}
   P(x) = (1-\nu) + \nu x \quad \text{with some } \nu \in \rbb.
\end{equation}
The existence of a QEI now depends on $\nu$. Setting $\Fmin^\infty := \lim_{\theta \to\pm\infty} \Fmin(\theta+i\pi)$, we have
\begin{equation}
  F_P(\theta) =  (1-\nu)\Fmin(\theta+i\pi)  + \nu \Fmin^\infty \cosh\theta   \frac{\Fmin(\theta+i\pi)}{\Fmin^\infty}.
\end{equation}
The first summand is negligible against $\cosh\theta$ for large $\theta$, and the last fraction becomes arbitrarily close to 1. Therefore, if $|\nu| > 1/2\Fmin^\infty$, then Proposition \ref{prop:qeinogo} applies with $c = \frac{1}{2}|\nu|\Fmin^\infty + \frac{1}{4} > \frac{1}{2}$. No QEI can hold. If, on the other hand,   $|\nu| < 1/2\Fmin^\infty$, then similarly Theorem \ref{thm:qei} applies with $c = \frac{1}{2}|\nu|\Fmin^\infty + \frac{1}{4} < \frac{1}{2}$, and a QEI follows. In the borderline case $|\nu| = 1/2\Fmin^\infty$, our methods do not yield a result. 

Interestingly, the behaviour of the free theory, where $\Fmin=1$, exactly matches the one of the sinh-Gordon model, only with $\Fmin^\infty=1$. We summarize:
   
\begin{theorem}\label{thm:uniqueness-sinh}
In the free and the sinh-Gordon models (cf.~Table~\ref{tab:models}),  a one-particle quantum energy inequality holds if
\begin{equation}\label{eq:palpha}
   P(x) = (1-\nu) + \nu x \quad \text{with } \nu \in \rbb, \; |\nu| < \frac{1}{2 \, \Fmin^\infty},
\end{equation}
where $\Fmin^\infty = \lim_{\theta \to\pm\infty} \Fmin(\theta+i\pi)$.
If $P$ is of this form but with  $|\nu| > 1/2\Fmin^\infty$, or if $\deg P \geq 2$, then no quantum energy inequality holds. \hfill\qed
\end{theorem}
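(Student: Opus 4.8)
The plan is to read this statement as a bookkeeping exercise that feeds the model-specific asymptotics of $\Fmin$ into the two general results already at our disposal: Theorem~\ref{thm:qei} for the positive (QEI-holds) direction and Proposition~\ref{prop:qeinogo} for the negative (no-QEI) direction. The only genuinely model-dependent input I need is the behaviour of $\Fmin$ on the line $\rbb+i\pi$ and on a thin complex strip around it. In the free case this is trivial: $\Fmin\equiv 1$ and $\Fmin^\infty=1$. In the sinh-Gordon case I would invoke \eqref{eq:jdef} together with the Riemann--Lebesgue lemma, as already noted in the text: $J_B(\zeta)$ tends to a constant as $\lvert\re\zeta\rvert\to\infty$, uniformly for $\im\zeta$ in a neighbourhood of $\pi$, so $\Fmin(\zeta+i\pi)=\exp J_B(\zeta+i\pi)$ converges to a positive real constant $\Fmin^\infty$, uniformly on a strip $\lvert\im\zeta\rvert<\lambda_0$. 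I would also record once and for all the elementary inequality $\lvert\cosh\zeta\rvert\leq\cosh\re\zeta$, and the fact (from Def.~\ref{def:fmin}\ref{it:FminIPiMinus} and the reality of $\Fmin$ on $\rbb+i\pi$) that $\Fmin^\infty$ is one and the same real number at $+\infty$ and $-\infty$.

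For the QEI-holds direction, take $P(x)=(1-\nu)+\nu x$ with $\lvert\nu\rvert<1/(2\Fmin^\infty)$ and write $F_P(\zeta)=\big((1-\nu)+\nu\cosh\zeta\big)\Fmin(\zeta+i\pi)$. Dividing by $\cosh\re\zeta$, using $\lvert\cosh\zeta\rvert\leq\cosh\re\zeta$ and the uniform convergence $\Fmin(\zeta+i\pi)\to\Fmin^\infty$ on the strip, I would show that $\lvert F_P(\zeta)\rvert/\cosh\re\zeta$ tends to $\lvert\nu\rvert\Fmin^\infty<\tfrac12$ as $\re\zeta\to\pm\infty$, uniformly for $\lvert\im\zeta\rvert<\lambda_0$. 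One can then pick $c$ strictly between $\lvert\nu\rvert\Fmin^\infty$ and $\tfrac12$ and a threshold $\theta_0$ so that the bound \eqref{eq:fpbounds} of Theorem~\ref{thm:qei} holds; the theorem then delivers the QEI. The case $\nu=0$, i.e. $P=1$, is the degenerate instance where $F_P$ is simply bounded on the strip.

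For the no-QEI direction I would split into the two listed cases. If $\deg P=1$ with $\lvert\nu\rvert>1/(2\Fmin^\infty)$, then $F_P(\theta)\sim\nu\Fmin^\infty\cosh\theta$, so $\lvert F_P(\theta)\rvert\geq c\cosh\theta$ for large $\theta$ with $c=\tfrac12\lvert\nu\rvert\Fmin^\infty+\tfrac14>\tfrac12$ (and this $c$ lies below the asymptotic slope $\lvert\nu\rvert\Fmin^\infty$, so the inequality is genuinely attained). If $\deg P=d\geq 2$ with leading coefficient $a_d$, then $F_P(\theta)\sim a_d\Fmin^\infty(\cosh\theta)^d$ grows faster than any multiple of $\cosh\theta$, so $\lvert F_P(\theta)\rvert\geq c\cosh\theta$ eventually for every $c$. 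In either case the growth hypothesis \eqref{eq:fmingrowfast} of Proposition~\ref{prop:qeinogo} is met, so no QEI can hold --- provided $F_P$ is eventually positive, which happens precisely when $\nu\Fmin^\infty>0$, respectively $a_d\Fmin^\infty>0$.

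The one point requiring care --- and the main obstacle --- is exactly this sign condition: Proposition~\ref{prop:qeinogo} is stated for $F_P(\theta)\geq c\cosh\theta$, whereas when $\nu<0$ (respectively $a_d\Fmin^\infty<0$) we instead have $F_P(\theta)\leq-c\cosh\theta$. I would resolve this by observing that the proof of Proposition~\ref{prop:qeinogo} drives only the eigenvalue $M_{j,11}-M_{j,12}$ of the $2\times2$ matrix $M_j$ to $-\infty$, using that the off-diagonal term $h_{j,12}\propto F_P(2j+\theta+\eta)$ becomes large and positive. When $F_P$ is instead large and negative at infinity, the identical argument applied to the other eigenvalue $M_{j,11}+M_{j,12}$ yields the same conclusion; this is a verbatim sign-flipped repetition of that proof, which I would record as a one-line remark rather than reproduce. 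Finally I would point out that the free and sinh-Gordon analyses are literally the same computation with the single substitution $\Fmin^\infty=1$ in the free case, which is exactly why the two models are treated together in one statement.
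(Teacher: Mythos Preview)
Your proposal is correct and follows essentially the same route as the paper: the paper's ``proof'' is the discussion immediately preceding the theorem (the statement carries only a \qed), and you reproduce it---verify the uniform asymptotics of $\Fmin(\zeta+i\pi)$ on a strip, read off the asymptotic slope of $F_P$, and feed into Theorem~\ref{thm:qei} or Proposition~\ref{prop:qeinogo} with the very same constant $c=\tfrac12\lvert\nu\rvert\Fmin^\infty+\tfrac14$.

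You are in fact more careful than the paper on one point. The paper writes ``$\lvert F_P(\theta)\rvert\sim(\cosh\theta)^{\deg P}$ \ldots\ by Proposition~\ref{prop:qeinogo} no QEI can hold'', silently applying a result whose hypothesis is $F_P(\theta)\geq c\cosh\theta$ without absolute value. Your observation that the negative-sign case is handled by driving the \emph{other} eigenvalue $M_{j,11}+M_{j,12}$ to $-\infty$ is correct: in the estimates of that proof, $h_{j,11}$ is still bounded above by roughly $\tfrac14 e^{2j}(c+\tfrac12)$ near the diagonal, while $h_{j,12}\leq -\tfrac{c}{2}e^{2j}e^{-2\rho_j}$ when $F_P(\theta)\leq -c\cosh\theta$, and the same arithmetic $\tfrac14(c+\tfrac12)-\tfrac{c}{2}<0$ for $c>\tfrac12$ goes through. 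So your one-line remark patches a genuine (if harmless) omission in the paper's exposition.
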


Thus, in the sinh-Gordon model and the free field, demanding a QEI strongly restricts the choice for the energy density, but it does not fix the energy density uniquely. Even at one-particle level, a certain arbitrariness remains.
At least in the free field situation, we observe however that the ``canonical'' case $P=1$ is distinguished by the fact that it is the only choice which makes the energy density \emph{nonnegative} in one-particle states.

The generalized Ising and generalized sinh-Gordon models in Table~\ref{tab:models} show a similar behaviour, but allow for some more variety, since the factors $-i\sinh(\zeta/2)$ in the denominator of $\Fmin$ make the function decay at real infinity. In consequence, for $\deg P$ not too large, QEIs will hold due to Theorem~\ref{thm:qei}; in fact the hypothesis of Proposition~\ref{prop:negative} may be violated and the energy density may be positive at one-particle level. For sufficiently large $\deg P$, however, Proposition~\ref{prop:qeinogo} will apply and hence QEIs will break down.

\section{Numerical results}\label{sec:numerical}

In the previous sections, we have discussed whether expectation values of the energy density can be negative, and we established under which conditions there are lower bounds to the energy density, i.e., $T^{00}(g^2) \geq -c_g \idop$ at one-particle level. Clearly it would be of interest to compute what the \emph{best} lower bound $c_g$ is for given $g$, or in other words, what the lowest spectral value of the integral operator in question is. However, all our methods so far are based on estimates, which are in fact quite inexplicit at places, and are therefore not suited to compute a useful value for the best lower bound. 
The only feasible way to do so appears to be numerical approximation. In this section, we will describe numerical methods to that end.

To be precise, we consider the operator $\opt := P_1T^{00}(g^2)P_1$, where $P_1$ is the projection onto	 the one-particle space. Our task is to compute the lowest value in the spectrum of $\opt$. (So far, we had defined $\opt$ as a symmetric operator on $\dcal(\rbb)$ only. However, since by Theorem~\ref{thm:qei} it is bounded below -- we assume that the hypothesis of the theorem is fulfilled --, $\opt$ canonically extends to a selfadjoint operator with the same lower bound \cite[Thm.~X.23]{ReedSimon:1975-2}, denoted by the same symbol, so that $\opt$ has a well-defined spectral decomposition.)

As a first complication, while we have shown that $\opt$ is bounded below under certain conditions, it is certainly unbounded from above due to the strong growth of the integral kernel at large arguments, cf. Eq.~\eqref{eq:t00f-expect}. We need to remove this discontinuity before the operator becomes amenable to a stable numeric treatment. To that end, we introduce a ``rapidity cutoff'': we will restrict the support of the wave functions to an interval $[-R,R]$, with sufficiently large $R>0$; in other words, we consider  $\opt$ as an integral operator on $L^2([-R,R],d\theta)$ rather than on $L^2(\rbb,d\theta )$. On this restricted space, the operator is bounded, in fact it is compact. The cutoff will certainly affect the large positive spectral values of the operator; but we claim that it does not significantly change the lowest spectral value, which we are interested in here. That this claim is at least self-consistent will become clearer below.

As a next step, we discretize the integral operator. To that end, we use an orthogonal Galerkin method with piecewise constant bump functions and application of the midpoint rule; alternatively speaking, the Nystr\"om method associated with the midpoint rule. For details of these approximation algorithms see, e.g., \cite[Ch.~4]{Chatelin:spectral}; but for the benefit of the reader, let us give a rough description of the method here. We divide the interval $[-R,R]$ into $N$ equally sized subintervals of length $h=2R/N$, with midpoints $\theta_j = -R + (j+\frac{1}{2})h$, where $j = 0,\ldots,N-1$. We consider the orthonormal system of step functions supported on these intervals,
\begin{equation}
  \varphi_j (\theta) = \begin{cases}
                           h^{-1/2} \quad & \text{if } |\theta_j-\theta| < \frac{h}{2} \\  
	                 0 & \text{otherwise}
                      \end{cases},
\qquad 0 \leq j < N.
\end{equation}
Then, we will only consider the matrix elements
\begin{equation}\label{eq:mtxe-numeric}
     M_{jk} = \hscalar{\varphi_j}{ T^{00}(g^2)  \varphi_k } = \frac{\mu^2}{2 \pi h}
 \int\limits_{\theta_j-\frac{h}{2}}^{\theta_j+\frac{h}{2}} d\theta
 \int\limits_{\theta_k-\frac{h}{2}}^{\theta_k+\frac{h}{2}} d\eta\,
 \cosh^2 \frac{\theta+\eta}{2} \,  F_P(\theta-\eta) \widetilde{g^2}(\mu\cosh\theta-\mu\cosh\eta).
\end{equation}
For easier evaluation, we will approximate the integral using the midpoint rule, which yields
\begin{equation}
    M_{jk} \approx \hat  M_{jk} :=  \frac{\mu^2}{2 \pi} h 
 \cosh^2 \frac{\theta_j+\theta_k}{2} \,  F_P(\theta_j-\theta_k) \widetilde{g^2}(\mu\cosh\theta_j-\mu\cosh\theta_k).
\end{equation}
This expression is then straightforward to compute, although evaluating the function $F_P$ will in some cases need numerical integration techniques for approximating $J_B$, cf.~\eqref{eq:jdef}. The eigenvalues and eigenvectors of the real symmetric $N\times N$ matrix $\hat M$ can now be found with standard numerical methods, such as the implicit QL algorithm.  While we have not given full error estimates here, we can reasonably expect that the lower eigenvectors and eigenvalues of $\hat M$ give an  approximation of the corresponding properties of $\opt$. The computer code implementing the method, which was used to produce the results in the following, is provided with this article \cite{suppl:code}, along with documentation.

For the sake of concreteness, we will in the following fix the smearing function $g$ to be a Gaussian,
\begin{equation}\label{eq:gexample}
 g(t) = \pi^{-1/4} \sqrt{\frac{\mu}{2\sigma}}  \exp\Big( -\frac{\mu^2 t^2}{8 \sigma^2}\Big)
\end{equation}
with a dimensionless parameter $\sigma>0$, so that $  \int g(t)^2 \,dt = 1 $
and
\begin{equation}
 \widetilde{g^2}(p) = \exp \Big(-\frac{\sigma^2 p^2}{\mu^2} \Big).
\end{equation}
With this, it is clear that the matrix elements \eqref{eq:mtxe-numeric} scale quadratically with $\mu$, so that we only
need to consider the ``normalized'' case $\mu = 1$; in other words, all energy density values will be plotted in units of $\mu^2$ in the following.

\begin{figure}
     \centering
     \begin{subfigure}[t]{0.5\textwidth}
             \includegraphics[width=\textwidth]{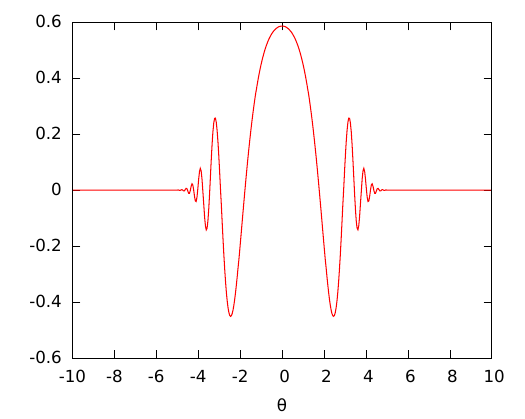}
             \caption{Ising model}
             \label{fig:eigenvec-ising}
     \end{subfigure}%
     \begin{subfigure}[t]{0.5\textwidth}
              \includegraphics[width=\textwidth]{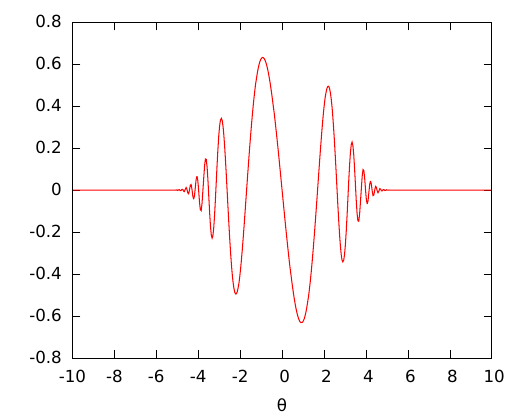}
             \caption{Sinh-Gordon model with coupling $B=1$.}
             \label{fig:eigenvec-sinh}
     \end{subfigure}
     \caption{Lowest eigenvector of $\opt$. Parameters: $N=500$, $R=10$, $\sigma = 0.1$, $P=1$.}
     \label{fig:eigenvec}
\end{figure}

We will first discuss the form of the eigenvector for the lowest eigenvalue of $\hat M$, which turns out to be clearly separated from the next higher eigenvalue. Figure \ref{fig:eigenvec} shows the eigenvectors (eigenfunctions) for the Ising model and for the sinh-Gordon model with maximal coupling (i.e., $B=1$), both with the ``canonical'' form of the energy density ($P=1$). While the vector certainly differs between these models, we can observe in both cases that the eigenfunction decays rapidly for large $\theta$, much before the chosen cutoff (here $R=10$) is reached. Hence our assumption that the lowest eigenvector is not influenced by the (reasonably large) cutoff is self-consistent. 

This observation also yields a useful consistency check for the numerical method: By testing whether the components of the eigenvector in direction $\varphi_j$  are small for $j \approx 0$ and $j \approx N-1$,  we can verify whether our choice for the cutoff $R$ was appropriate; see the method  \\ {\tt kernels.OneParticleKernel.checkPlausibility()} in the computer code. 
We remark that a suitable choice of $R$ is crucial to the stability of the numeric method: The highest (positive) eigenvalues of  $\hat M$ are several orders of magnitude larger than the lowest (negative) one, and they grow rapidly with increasing $R$. Hence if $R$ is chosen unduly large, significant roundoff errors will affect the result of the QL algorithm when working at a given floating point precision. 

\begin{figure}[btp]
     \centering
     \begin{subfigure}[b]{0.5\textwidth}
             \includegraphics[width=\textwidth]{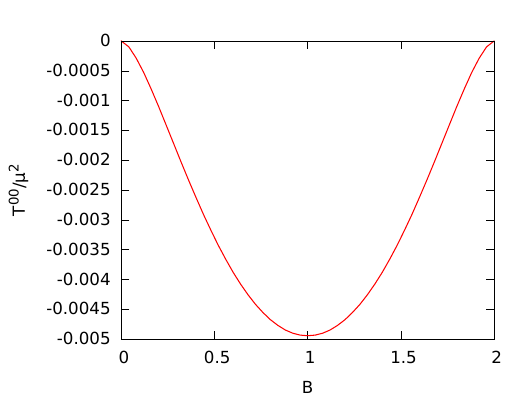}
             \caption{Sinh-Gordon model, coupling $0 < B < 2$}
             \label{fig:coupling-sinhgordon}
     \end{subfigure}%
     \begin{subfigure}[b]{0.5\textwidth}
             \includegraphics[width=\textwidth]{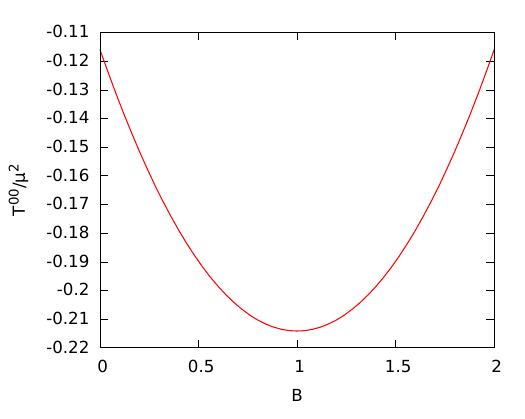}
             \caption{Generalised Ising model, coupling $0 < B < 2$}
             \label{fig:coupling-sinhising}
     \end{subfigure}
     \caption{Lowest eigenvalue of $\opt$ depending on the coupling constant. Parameters: $N=500$, $R=7$, $\sigma = 0.1$; (a) $P=1$, (b) $P(x)=(1+x)/2$.}
     \label{fig:coupling}
\end{figure}

Next, let us compare the lowest eigenvalue of $\hat M$ in various models. Specifically, let us choose the sinh-Gordon model, depending on the coupling constant $0<B<2$, 
again with $P=1$. Figure \ref{fig:coupling-sinhgordon} shows the lowest eigenvalue in this model as a function of the coupling constant $B$. Note that $B$ and $2-B$ correspond to the same scattering function. As expected, when $B$ approaches the values 0 or 2, the lowest eigenvalue approaches the one of the free theory (namely, 0). The lowest possible negative energy density is reached when the coupling is maximal ($B=1$), which fits with the picture that negative energy density in one-particle states is an effect of self-interaction in the quantum field theory.

We also investigate the corresponding generalised Ising model with one coupling constant, i.e., with scattering function
\begin{equation}
S(\zeta) = -\dfrac{\sinh \zeta - i \sin B \pi/2 }{\sinh \zeta + i \sin B \pi/2 } = - S_\text{sinh-Gordon}(\zeta), \quad 0<B<2.
\end{equation}
Note that $S(\zeta) \to S_\mathrm{Ising}(\zeta)=-1$ as $B \to 0$.
In this case, rather than $P=1$ we choose $P(x)=(1+x)/2$: this guarantees that also $F_{P}$ converges to the canonical Ising model expression as $B \to 0$.
Both Proposition~\ref{prop:negative} and Theorem~\ref{thm:qei} are applicable with this choice of $P$. As Fig.~\ref{fig:coupling-sinhising} shows, the lowest possible eigenvalue is still reached at maximal coupling $B=1$, whereas for $B\to 0$ we obtain the Ising value of $\approx-0.116$. 

\begin{figure}[htpb]
     \centering
     \begin{subfigure}[t]{0.5\textwidth}
             \includegraphics[width=\textwidth]{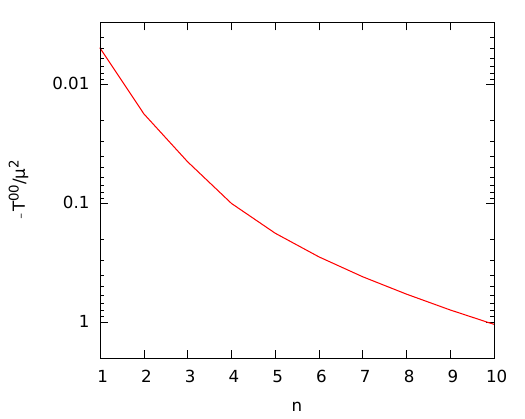}
             \caption{Several real coupling constants with $B=1$}
             \label{fig:multiplecoupling-real}
     \end{subfigure}%
     \begin{subfigure}[t]{0.5\textwidth}
             \includegraphics[width=\textwidth]{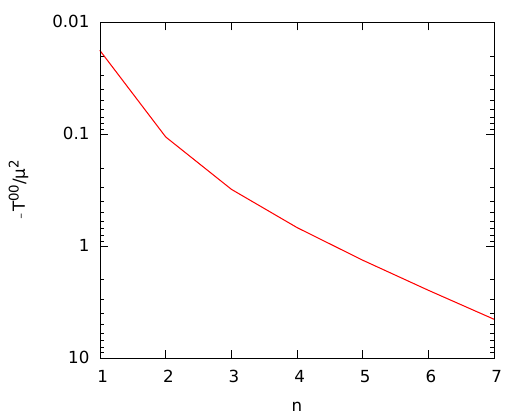}
             \caption{Several complex coupling constants $B_k=1+k \cdot 0.1 i$, $1 \leq k \leq n$, with complex conjugate factors added. }
             \label{fig:multiplecoupling-complex}
     \end{subfigure}
     \caption{Lowest eigenvalue of $\opt$ in the generalized sinh-Gordon model. The horizontal axis shows the number of coupling constants $n$. Parameters: $N=500$, $R=7$, $\sigma=0.1$; see Eq.~\eqref{eq:pmultiple} for $P$. Note the logarithmic scale on the vertical axis.}
     \label{fig:multiplecoupling}
\end{figure}

To investigate other types of interaction, we also show the lowest eigenvalue in the generalized sinh-Gordon model with several factors in the scattering matrix, i.e., with several coupling constants $B_j$ (cf. Table \ref{tab:models}). We do this once for $n$ coupling constants that are equal and real ($B_j = 1$ for all $j$), which is shown in Fig.~\ref{fig:multiplecoupling-real},
and once for pairs of complex-conjugate factors with $B=1 \pm k \cdot 0.1 i$, $k=1,\ldots,n$ see Fig.~\ref{fig:multiplecoupling-complex}. 
We choose the polynomial $P$ such that the free-field expression $F_P=1$ is reached when all coupling constants are set to 0; namely
\begin{equation}\label{eq:pmultiple}
\begin{aligned}
  P(x) &= \Big(\frac{1+x}{2}\Big)^{\lfloor n/2 \rfloor} \quad &\text{($n$ real coupling constants)} \quad &\text{and} \quad
\\
P(x) &= \Big(\frac{1+x}{2}\Big)^{n} \quad &\text{($n$ complex conjugate pairs)}.&
\end{aligned}
\end{equation}
In both cases, when the number $n$ grows, the lowest eigenvalue of $\opt$ rapidly grows towards negative values. This is consistent with our computations in Sec.~\ref{sec:qei}, since in particular the derivatives of $F_P$ grow rapidly with $n$. Thus again, a stronger interaction leads to more negative energy density.

\begin{figure}[htpb]
\begin{center}
  \includegraphics[width=0.8\textwidth]{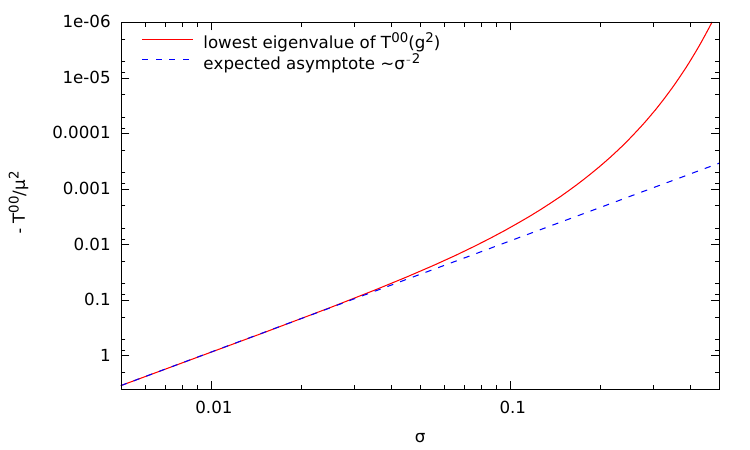}
 \caption{Lowest eigenvalue in the sinh-Gordon model ($B=1$) depending on the smearing width parameter $\sigma$. Parameters: $N=500$, $R=10$, $P=1$. Note the double logarithmic scale}
 \label{fig:smearingwidth}
\end{center}
\end{figure}

So far, we have always used the same ``smearing function'' $g^2$ in our analysis. We now vary the width $\sigma$ of the Gaussian function $g$, see Eq.~\eqref{eq:gexample}, and observe the change in the lowest eigenvalue of the smeared energy density, as shown in Fig.~\ref{fig:smearingwidth}. Since the energy density is expected to be a quantum field of scaling dimension 2 in the high energy limit, one conjectures that for small values of $\sigma$, the lowest eigenvalue of $\opt$ roughly scales with $\sigma^{-2}$. In fact, this is confirmed by our numerical results; a corresponding linear asymptote with slope $2$ has been added to the double-logarithmic plot Fig.~\ref{fig:smearingwidth} to demonstrate this.

\begin{figure}[htpb]
     \centering
     \begin{subfigure}[t]{0.5\textwidth}
             \includegraphics[width=\textwidth]{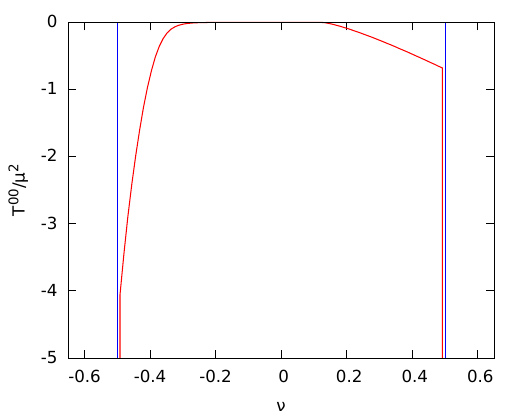}
             \caption{Free field}
             \label{fig:nonstandard-free}
     \end{subfigure}%
     \begin{subfigure}[t]{0.5\textwidth}
             \includegraphics[width=\textwidth]{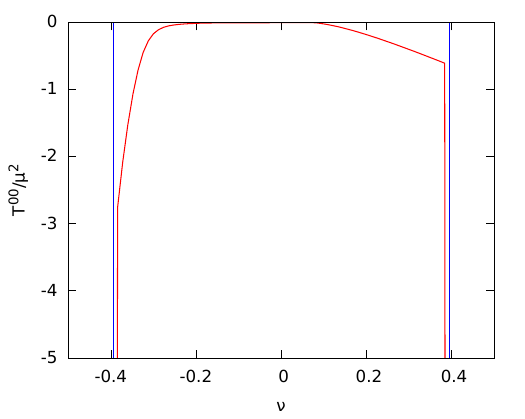}
             \caption{sinh-Gordon model, $B=1$}
             \label{fig:nonstandard-sinhgordon}
     \end{subfigure}
     \caption{Lowest eigenvalue of $\opt$ with a ``non-canonical'' energy density with $P(x)=(1-\nu)+\nu x$.  Parameters: $N=300$, $R=8$, $\sigma=0.1$. The vertical lines indicate the theoretical
      limit for existence of QEIs, $|\nu|=1/2\Fmin^\infty$. Note that in case (b), we have $\Fmin^\infty \approx 1.267$.}
     \label{fig:nonstandard}
\end{figure}

Finally, in the free model and the sinh-Gordon model, we want to investigate the ``noncanonical'' form of the energy density as described in Sec.~\ref{sec:models}, where the polynomial $P$ is not constant but linear, see Eq.~\eqref{eq:palpha}. In this case, we know from Theorem~\ref{thm:uniqueness-sinh} that $\opt$ is bounded below if $|\nu|<1/2\Fmin^\infty$, but unbounded if $|\nu|>1/2\Fmin^\infty$. Our numerical results in Fig.~\ref{fig:nonstandard} confirm this: 
At the critical value $|\nu|=1/2\Fmin^\infty$, indicated by the vertical lines, the lowest eigenvalue of $\hat M$ becomes several orders of magnitude larger than for smaller values of $|\nu|$. In fact, for $|\nu|>1/2\Fmin^\infty$, the form of the eigenvector changes drastically from what is known from Fig.~\ref{fig:eigenvec}, and the eigenvalue is highly dependent on the value of the cutoff $R$. (Details of this are not shown in the graph.)

\section{Conclusions and outlook}\label{sec:conclusions}

In this article, we have investigated the energy density in certain integrable models at one-particle level. Starting from the general properties of a local operator, as derived in \cite{BostelmannCadamuro:characterization}, and the generic properties of a stress-energy tensor, we were able to determine $T^{\alpha\beta}$ in one-particle matrix elements up to a certain polynomial factor, $P(\cosh(\theta-\eta))$. While one-particle states with negative energy density quite generally exist, the choice of energy density is further restricted if we demand that a quantum energy inequality holds. 

In fact, in the Ising model, the existence of one-particle state-independent QEIs fixes the form of the energy density in one-particle states uniquely. In a different class of models, including the sinh-Gordon model and the free Bose field, QEIs reduce the ambiguity in the energy density, but they still leave us with one parameter to be chosen (see Theorem~\ref{thm:uniqueness-sinh}). This choice affects the form factor $F_2$ of the stress-energy tensor, and is different from the ambiguity in $F_1$ discussed in \cite{MussardoSimonetti:1994}, which does not contribute to one-particle expectation values.

It is interesting to note how this choice in $F_2$ works out in the free field case ($S=1$). The usual ``canonical'' stress-energy tensor, 
\begin{equation}
T^{\alpha\beta}_{\text{canonical}} = {:} \partial^\alpha \phi \, \partial^\beta \phi {:} 
 - \frac{1}{2} \eta^{\alpha\beta}{:} \partial^\gamma \phi \,\partial_\gamma \phi {:} 
 + \frac{\mu^2}{2} \eta^{\alpha\beta} {:} \phi^2{:}\;, 
\end{equation}
corresponds to our choice $P(x)=1$ at one-particle level. But other choices of $P$ can be linked with local fields on Fock space as well. In particular, one possible choice of stress-energy tensor corresponding to $P(x)=x$ is
\begin{equation}
T^{\alpha\beta}_{\text{alternative}} = 
-{:} \phi \, \partial^\alpha \partial^\beta \phi {:} 
 + \frac{1}{2} \eta^{\alpha\beta}{:} \partial^\gamma \phi \, \partial_\gamma \phi {:} 
 - \frac{\mu^2}{2} \eta^{\alpha\beta} {:} \phi^2{:} \,.
\end{equation}
One can easily show that this field does in fact satisfy the standard properties of a stress-energy tensor on the full Fock space, including covariance and the continuity equation. A permissible energy density that fulfils QEIs at one-particle level would then be given by
\begin{equation}
T^{00}_\nu = (1- \nu)\,T^{00}_{\text{canonical}} +\nu\, T^{00}_{\text{alternative}} \quad \text{with}\quad |\nu|< \frac{1}{2}.
\end{equation}
We have not verified at this point whether the QEI for this energy density holds on all Fock space, however. 

Throughout this article, we have considered QEIs only on the level of one-particle states. A natural next step would be to see whether, in the same class of integrable quantum field theories, this inequality holds for all (suitably regular) states, independent of the particle number. We indeed expect that a state-independent QEI will still hold in this case; numerical evidence in two-particle states in the sinh-Gordon model suggests this. But establishing such a QEI rigorously will require substantially more technical effort. 

Specifically, let us reconsider the expansion \eqref{eq:expansion}. While in the one-particle case, only the kernel $F_2[A]$ contributes, in the general case one needs to consider all $F_k[A]$. In order to establish QEIs, we would need to compute operator norm bounds on these (or related) integral operators. However, while $F_2[A]$ is analytic, the higher-order $F_k[A]$  will possess (regularized) first-order poles in the integration variable, i.e., they are ``singular integral kernels''. Methods for finding operator bounds on these singular integral operators exist \cite{christ1990lectures}, but they are much more intricate than those we used in Sec.~\ref{sec:qei}, where we essentially estimated the kernel itself in $L^2$ norm.

Independent of that, it may also be worthwhile to generalize our results to a larger class of integrable models: While in the present paper we have considered one species of scalar Bosons, one would also like to consider models with several particle species, possibly of different masses and of higher spin, models with inner symmetries (gauge symmetries) such as the non-linear $O(N)$ $\sigma$ models \cite{BabujianFoersterKarowski:2013}, and models with bound states such as the sine-Gordon model \cite{BabujianFringKarowskiZapletal:1993}. The construction of models with several particle species has been investigated in the algebraic approach to integrable systems \cite{LechnerSchuetzenhofer:2012}, which can provide a starting point for our investigation. For models with bounds states, a rigorous treatment has only partially been achieved (see \cite{CadamuroTanimoto:scalar}). Extending our analysis to these models, one might test whether the QEI \eqref{eq:qei0} persists for more general types of self-interaction, and one would be able to study the behaviour of the lower bounds on a wider range of parameters.

\section*{Acknowledgements}

We would like to thank C.~J.~Fewster for discussions and valuable suggestions.

\bibliographystyle{alpha}
\bibliography{../../integrable,suppl_arxiv}

\end{document}